\newtheorem{theorem}{Theorem}
\newtheorem{definition}{Definition}
\newtheorem{lemma}{Lemma}
\def\qed{\hfill $\Box$}
\newcommand{\red}[1]{\textcolor{black}{#1}}
\newcommand{\state}{\varphi}
\newcommand*\patchAmsMathEnvironmentForLineno[1]{
  \expandafter\let\csname old#1\expandafter\endcsname\csname #1\endcsname
  \expandafter\let\csname oldend#1\expandafter\endcsname\csname end#1\endcsname
  \renewenvironment{#1}
     {\linenomath\csname old#1\endcsname}
     {\csname oldend#1\endcsname\endlinenomath}}
\newcommand*\patchBothAmsMathEnvironmentsForLineno[1]{
  \patchAmsMathEnvironmentForLineno{#1}
  \patchAmsMathEnvironmentForLineno{#1*}}
\renewcommand{\@algocf@capt@plain}{above}
\newenvironment{proof}[1][\proofname]{\par
  \normalfont
  \topsep6\p@\@plus6\p@ \trivlist
  \item[\hskip\labelsep{\bfseries #1}\@addpunct{\bfseries.}]\ignorespaces
}{
  \endtrivlist
}
\newcommand{\settheoremtag}[1]{
  \let\oldthetheorem\thetheorem
  \renewcommand{\thetheorem}{#1}
  \g@addto@macro\endtheorem{
    \addtocounter{theorem}{-1}
    \global\let\thetheorem\oldthetheorem}
  }
\begin{document}

\title{Computationally Efficient Quantum Expectation with Extended Bell Measurements}

\author{Ruho Kondo}
\affiliation{Toyota Central R\&D Labs., Inc., 41-1, Yokomichi, Nagakute, Aichi 480-1192, Japan}
\email{r-kondo@mosk.tytlabs.co.jp}

\author{Yuki Sato}
\affiliation{Toyota Central R\&D Labs., Inc., 41-1, Yokomichi, Nagakute, Aichi 480-1192, Japan}

\author{Satoshi Koide}
\affiliation{Toyota Central R\&D Labs., Inc., 41-1, Yokomichi, Nagakute, Aichi 480-1192, Japan}

\author{Seiji Kajita}
\affiliation{Toyota Central R\&D Labs., Inc., 41-1, Yokomichi, Nagakute, Aichi 480-1192, Japan}

\author{Hideki Takamatsu}
\affiliation{Toyota Motor Corporation, 1 Toyota-Cho, Toyota, Aichi 471-8571, Japan}

\maketitle


\begin{abstract}
Evaluating an expectation value of an arbitrary observable $A\in{\mathbb C}^{2^n\times 2^n}$ through na\"ive Pauli measurements requires a large number of terms to be evaluated.
We approach this issue using a method based on Bell measurement, which we refer to as the extended Bell measurement method.
This analytical method quickly assembles the $4^n$ matrix elements into at most $2^{n+1}$ groups for simultaneous measurements in $O(nd)$ time, where $d$ is the number of non-zero elements of $A$.
The number of groups is particularly small when $A$ is a band matrix.
When the bandwidth of $A$ is $k=O(n^c)$, the number of groups for simultaneous measurement reduces to $O(n^{c+1})$.
In addition, when non-zero elements densely fill the band, the variance is $O((n^{c+1}/2^n)\,{\rm tr}(A^2))$, which is small compared with the variances of existing methods.
The proposed method requires a few additional gates for each measurement, namely one Hadamard gate, one phase gate and at most $n-1$ CNOT gates.
Experimental results on an IBM-Q system show the computational efficiency and scalability of the proposed scheme, compared with existing state-of-the-art approaches.
Code is available at \url{https://github.com/ToyotaCRDL/extended-bell-measurements}.
\end{abstract}

\section{Introduction}
Variational quantum algorithms (VQAs)~\cite{vqa,vqa2} have attracted attention in recent years because they are suitable for so-called Noisy Intermediate-Scale Quantum (NISQ) computers~\cite{nisq}, which do not have a fault-tolerant mechanism and have only tens or hundreds of qubits.
The VQA was first introduced in the field of chemistry but it has wide application, such as in solving linear systems~\cite{linear,vqls,xu2021variational,poisson1,poisson2}, optimization~\cite{qaoa1,qaoa2,qaoa3}, and machine learning~\cite{qml1,qml2,qml3,qml4,qml5,qml6,qml7}.
In VQAs, the quantum state, $\big|\psi\big>$, is parametrized by learning parameters held on a classical computer, and is used to calculate the expectation value, $\big<\psi\big|A\big|\psi\big>$, whose $A$ depends on the task.
The desired quantum state is then obtained by minimizing the expectation value using a classical optimization algorithm.

The bottleneck of a VQA relates to the computational cost of calculating the expectation value~\cite{towards}.
A popular method of evaluating the expectation value is a Pauli measurement that writes $A$ in the Pauli basis; i.e., $A=\sum^\mathcal{N}_{i=1} c_i {\bf P}_i$, where ${\bf P}_i$ is the tensor product of Pauli matrices and $c_i={\rm Tr}\big(A^\dagger {\bf P}_i\big)/2^n$ is the corresponding coefficient, with $(\bullet)^\dagger$ indicating the Hermitian conjugate.
The drawback of a na\"ive Pauli measurement is that it requires a large number of unique circuits to evaluate $\big<\psi\big|A\big|\psi\big>=\sum^\mathcal{N}_{i=1}c_i\big<\psi\big|{\bf P}_i\big|\psi\big>$ where $\mathcal{N}$ reaches $4^n$ when $A$ is a fully dense matrix.

Reducing the number of unique circuits for Pauli measurements is an interesting topic relating to the VQA~\cite{coloring,qwc,unitarypartitioning1,unitarypartitioning2,gc,sortedinsertion,entanglemeas}.
Both qubit-wise commuting (QWC)~\cite{qwc0,efficient,qwc} and general commuting (GC)~\cite{gc} methods, and their variants~\cite{unitarypartitioning1,unitarypartitioning2,entanglemeas}, reduce the number of unique circuits on the basis that commutative operators can be simultaneously diagonalized with the same unitary; i.e., commutative Pauli bases are simultaneously measurable.
This approach successfully reduces the number of unique circuits but its weakness is that finding the smallest groups for simultaneous measurements itself is an NP-hard problem.
These methods approximate solutions that are unfortunately not optimal in general.
In other words, a smaller number of unique circuits results in a longer calculation time, and a shorter calculation time results in a larger number of unique circuits.

Another problem is that the expression of general matrices in the Pauli basis is numerically inefficient. 
Suppose a sparse matrix $A$ that has a few non-zero elements.
In the extreme case, for example, if there exists just one non-zero element in $A$, then $2^n$ Pauli strings are required to represent $A$.
Despite the inefficiency of the Pauli basis, the sparse matrix is widely used in the domain of computer science.
A typical application is the finite element method (FEM)~\cite{fem} used for structural analysis,  fluid dynamics, heat transfer, electromagnetic analysis and many other continuum mechanics.
In the FEM, three dimensional real-space is discretized by a finite number of elements and the stiffness of the system is constructed by the nodal coordinates and properties of each element.
The resulting global stiffness matrix, $A$, in an appropriate node number ordering represents a band matrix because only the neighboring nodes in the elements can interact with each other.
Then, solving $Ax=b$, where $b$ denotes the external inputs, gives the desired response $x$.
Solving $Ax=b$ using VQAs has long been studied~\cite{linear,vqls,xu2021variational,poisson1,poisson2} but most of the VQAs limit the form of $A$.
For example, the assumption is that $A$ can be represented as the sum of a few weighted unitaries or weighted Pauli strings~\cite{linear,vqls,xu2021variational} or can be decomposed into a few measurable terms using the problem--specific conditions~\cite{poisson1,poisson2}.
Recently, an efficient expectation estimation method for a sparse Hamiltonian has been developed~\cite{kirby2021variational}, but it remains unsuitable for the NISQ device because of its \red{two-qubit entangling gate} counts.
Efficient expectation evaluations for the non-Pauli observables in terms of both the number of measurements and the circuit complexity are in great demand, such as for FEM applications, but the above issues have not yet been resolved.

A completely different approach that can be taken to reduce the computational cost for the expectation evaluation is to use the so-called classical shadow~\cite{shadow,shadow2,shadow3}, which reduces the number of measurements for the Pauli observables.
However, in the case of non-Pauli observables, this approach is not suitable for NISQ devices because it requires $O(n^2/\log(n))$ \red{two-qubit entangling gates} as pointed out in the original literature~\cite{shadow}.

In this paper, we propose an alternative method that we refer to as the \red{\emph{extended Bell measurement (XBM)}}.
This method is based on not Pauli measurements but Bell measurements.
The basic idea is inspired by Ref.~\cite{poisson1}, where the expectation value required to solve the Poisson equation was evaluated using Bell measurements.
We generalize this method for an arbitrary expectation evaluation.
In addition, we derive that the present method is closely related to the classical shadow with random Clifford measurements~\cite{shadow}.
The XBM method has the following advantages over existing methods.
\begin{enumerate}
\item Fast grouping: The XBM method does not solve any difficult NP-hard problems, but it analytically groups the simultaneous measurements. 
\item Small number of circuits: The method requires the number of unique
circuits to be of the same order as the empirical number of an existing state-of-the-art
Pauli measurement-based method, such as the GC method.
\item Fewer \red{two-qubit entangling gates}: The method takes at most $n-1$ CNOT gates for each measurement operator.
\item Tight upper bound of the variance when $A$ is the densely filled band matrix.
\end{enumerate}
Concretely, when $A$ is a band matrix whose bandwidth is $k=O(n^c)$, the number of distinct measurement operators is $O(n^{c+1})$.
In addition, when the non-zero elements are densely filled in the band, the upper bound of the variance of the expectation evaluation is $O((n^{c+1}/2^n)\,{\rm tr}(A^2))$, which is smaller than that of the classical shadow~\cite{shadow}.

Our algorithm is implemented with Qiskit~\cite{qiskit} and its code is available at \url{https://github.com/ToyotaCRDL/extended-bell-measurements}.
Additionally, real-device experiments are carried out using an IBM-Q system.
The results show that the accuracy of the expectation evaluation is comparable to that of na\"ive Pauli measurements and QWC but the execution time is shorter than the execution times of the existing methods when $A$ is a band matrix.

The remainder of the paper is organized as follows:
Section \ref{sec:method} describes the XBM method proposed in this paper.
Section \ref{sec:algorithm} presents algorithms of the XBM method and derives the computational complexity.
Section \ref{sec:experiment} presents the experimental results.
Section \ref{sec:conclusions} concludes the paper.

\section{Method}
\label{sec:method}
\red{
\subsection{Two-qubit Example of Proposed Method}
Before describing the general proposed method, we show an example with two-qubit.
Let $|\varphi\rangle \in {\mathbb C}^{4}$ be an arbitrary two-qubit quantum state.
Consider the following expectation value:
\[\displaystyle{
\langle \varphi| A | \varphi \rangle
= \langle \varphi|
\begin{bmatrix}
A_{00} & A_{01} & A_{02} & A_{03} \\
A_{10} & A_{11} & A_{12} & A_{13} \\
A_{20} & A_{21} & A_{22} & A_{23} \\
A_{30} & A_{31} & A_{32} & A_{33} \\
\end{bmatrix}
|\varphi \rangle
}\]
\[\displaystyle{
=\sum^3_{b=0}\sum^3_{c=0} A_{bc} \langle \varphi| b \rangle \langle c  |\varphi \rangle.
}\]
The diagonal components of $A$ can be easily evaluated as
\[\displaystyle{
\sum^3_{b=0}A_{bb}\langle \varphi|b\rangle\langle b  |\varphi \rangle
=\sum^3_{b=0}A_{bb}|\langle b | \varphi \rangle|^2
}\]
where $|\langle b | \varphi \rangle|^2$ is the probability of obtaining a bit-string $b$ when measuring the quantum state $|\varphi\rangle$ in the computational basis.
Meanwhile, evaluating the off-diagonal parts of $A$ is not trivial.
Consider $A_{12}$ for example.
The expectation value of $A_{12}$ can be rearranged as
\[\displaystyle{
\langle \varphi|
\begin{bmatrix}
0 & 0 & 0 & 0 \\
0 & 0 & A_{12} & 0 \\
0 & 0 & 0 & 0 \\
0 & 0 & 0 & 0 \\
\end{bmatrix}
|\varphi \rangle
=\langle \varphi|\Big(A_{12}|1\rangle \langle 2|\Big)|\varphi \rangle
}\]
\[\displaystyle{
=A_{12}\Big(
{\rm Re}\big(\langle \varphi | 1 \rangle \langle 2 | \varphi \rangle\big)
+{\rm i}{\rm Im}\big(\langle \varphi | 1 \rangle \langle 2 | \varphi \rangle\big)
\Big)
}\]
where ${\rm Re}(\bullet)$ and ${\rm Im}(\bullet)$ are the real and imaginary parts of $(\bullet)$, respectively.
The real part of $\langle \varphi | 1 \rangle \langle 2 | \varphi \rangle$ is further rearranged as 
\[\displaystyle{
{\rm Re}\big(\langle \varphi | 1 \rangle \langle 2 | \varphi \rangle\big)
}\]
\[\displaystyle{
=\frac{1}{2}\Bigg(
\Bigg|\frac{\langle 01 | + \langle 10 | }{\sqrt{2}} | \varphi \rangle\Bigg|^2
-\Bigg|\frac{\langle 01 | - \langle 10 | }{\sqrt{2}} | \varphi \rangle\Bigg|^2
\Bigg)
}\]
where $|01\rangle=|1\rangle$ and $|10\rangle=|2\rangle$, respectively.
The quantum states $(|01\rangle \pm |10\rangle)/\sqrt{2}$ can be prepared using the Hadamard gate and CNOT gate as
\[\displaystyle{
\frac{| 01 \rangle + | 10 \rangle }{\sqrt{2}}
={\rm CNOT}(1,0){\rm H}(1) |01\rangle
}\]
\[\displaystyle{
\frac{| 01 \rangle - | 10 \rangle }{\sqrt{2}}
={\rm CNOT}(1,0){\rm H}(1) |11\rangle
}\]
where ${\rm H}(j)$ denotes the Hadamard gate acting on $j$th qubit, and ${\rm CNOT}(j,k)$ is the CNOT gate whose control and target qubits are $j$th and $k$th qubits, respectively.
Here, the rightmost bit is 0th bit whereas the leftmost bit is $(n-1)$th bit.
Using these preparation unitaries, ${\rm Re}\big(\langle \varphi | 1 \rangle \langle 2 | \varphi \rangle\big)$ is written as
\[\displaystyle{
{\rm Re}\big(\langle \varphi | 1 \rangle \langle 2 | \varphi \rangle\big)
=\frac{1}{2}\Bigg(
\Big|\langle 01 | {\rm H}(1){\rm CNOT}(1,0) | \varphi \rangle\Big|^2
}\]
\[\displaystyle{
-\Big|\langle 11 | {\rm H}(1){\rm CNOT}(1,0) | \varphi \rangle\Big|^2
\Bigg).
}\]
This value can be evaluated by the probabilities of obtaining bit-strings $01$ and $11$ when measuring ${\rm H}(1){\rm CNOT}(1,0) | \varphi\rangle$ in the computational basis.
In the same way,  ${\rm Im}\big(\langle \varphi | 1 \rangle \langle 2 | \varphi \rangle\big)$ is written as
\[\displaystyle{
{\rm Im}\big(\langle \varphi | 1 \rangle \langle 2 | \varphi \rangle\big)
=\frac{1}{2}\Bigg(
\Big|\langle 01 | {\rm H}(1){\rm CNOT}(1,0){\rm S}^\dagger(1) | \varphi \rangle\Big|^2
}\]
\[\displaystyle{
-\Big|\langle 11 | {\rm H}(1){\rm CNOT}(1,0){\rm S}^\dagger(1) | \varphi \rangle\Big|^2
\Bigg)
}\]
where ${\rm S}(j)$ denotes the phase gate acting on $j$th qubit.
This value can be evaluated by the probabilities of obtaining bit-strings $01$ and $11$ when measuring ${\rm H}(1){\rm CNOT}(1,0){\rm S}^\dagger(1) | \varphi\rangle$ in the computational basis.
In summary,
\[\displaystyle{
\langle \varphi|
\begin{bmatrix}
0 & 0 & 0 & 0 \\
0 & 0 & A_{12} & 0 \\
0 & 0 & 0 & 0 \\
0 & 0 & 0 & 0 \\
\end{bmatrix}
|\varphi \rangle
}\]
\[\displaystyle{
=A_{12}\Bigg[
\frac{1}{2}\Bigg(
\Big|\langle 01 | M^{(1,2)}_{\rm Re} | \varphi \rangle\Big|^2
-\Big|\langle 11 | M^{(1,2)}_{\rm Re} | \varphi \rangle\Big|^2
\Bigg)
}\]
\[\displaystyle{
+\frac{{\rm i}}{2}\Bigg(
\Big|\langle 01 | M^{(1,2)}_{\rm Im} | \varphi \rangle\Big|^2
-\Big|\langle 11 | M^{(1,2)}_{\rm Im} | \varphi \rangle\Big|^2
\Bigg)
\Bigg]
}\]
where
\[\displaystyle{
M^{(1,2)}_{\rm Re} := {\rm H}(1){\rm CNOT}(1,0)
}\]
and
\[\displaystyle{
M^{(1,2)}_{\rm Im} := {\rm H}(1){\rm CNOT}(1,0){\rm S}^\dagger(1),
}\]
respectively.
The remaining parts of $A$ can be evaluated in the same manner.
}

\red{
The most important part of the proposed method is the simultaneous measurements.
That is, some components of $A$ can be evaluated using the same unitary.
For example, $A_{03}$ can be evaluated using the same unitaries as those for $A_{12}$, ${\rm H}(1){\rm CNOT}(1,0)$ and ${\rm H}(1){\rm CNOT}(1,0){\rm S}^\dagger(1)$, because
{
\setlength{\abovedisplayskip}{0pt}
\setlength{\belowdisplayskip}{0pt}
\[\displaystyle{
(|0\rangle+|3\rangle)/\sqrt{2}
= {\rm CNOT}(1,0) {\rm H}(1) |00\rangle,
}\]
\[\displaystyle{
(|0\rangle-|3\rangle)/\sqrt{2}
= {\rm CNOT}(1,0) {\rm H}(1) |10\rangle,
}\]
\[\displaystyle{
(|0\rangle+{\rm i}|3\rangle)/\sqrt{2}
= {\rm S}(1){\rm CNOT}(1,0) {\rm H}(1) |00\rangle,
}\]
\[\displaystyle{
(|0\rangle-{\rm i}|3\rangle)/\sqrt{2}
= {\rm S}(1){\rm CNOT}(1,0) {\rm H}(1) |10\rangle.
}\]
 In addition, $A_{21}$ and $A_{30}$ can be evaluated using the same unitaries as those for $A_{12}$ and $A_{03}$, respectively, because ${\rm Re}\big(\langle \varphi | b \rangle \langle c | \varphi \rangle \big) = {\rm Re}\big(\langle \varphi | c \rangle \langle b | \varphi \rangle \big)$ and ${\rm Im}\big(\langle \varphi | b \rangle \langle c | \varphi \rangle \big) = - {\rm Im}\big(\langle \varphi | c \rangle \langle b | \varphi \rangle \big)$.
Similarly, the real and imaginary parts of $A_{01}$, $A_{10}$, $A_{23}$ and $A_{32}$ can be evaluated using ${\rm H}(0)$ and ${\rm H}(0){\rm S}^\dagger(0)$, respectively, whereas the real and imaginary of $A_{02}$, $A_{13}$, $A_{20}$ and $A_{31}$ can be evaluated using ${\rm H}(1)$ and ${\rm H}(1){\rm S}^\dagger(1)$, respectively.
}
}



\subsection{Expectation Estimation}

\renewcommand{\arraystretch}{1.5}
\begin{table*}[tb]
\centering
\caption{Definitions of the symbols appearing in Eq.~\eqref{eq:exp}.}
\label{tab:def}
\begin{tabular}{ll}
\toprule
Symbol & Definition\\
\midrule
${\mathcal{S}(A)} $ & $\{b \oplus c | A_{bc}\neq 0\}$\\
$\oplus$ & bit-wise exclusive OR\\
$(b',c',w)$ & $(b,b\oplus l,0)$ for $b < b\oplus l$\\
 & $(b\oplus 2^{j_0^{(l)}}, b\oplus l\oplus 2^{j_0^{(l)}},1)$ for $b > b \oplus l$\\
$j^{(l)}_0$ & $\max \{j | l_j = 1\}$\\
$l_j$ & a value of $j$th bit of $l$ in the binary expression\\
$A_{(bc)}$ & $(A_{bc}+A_{cb})/2$ \\
$A_{[bc]}$ & $(A_{bc}-A_{cb})/2$ \\
$M^{(l)}_{\rm Re}$ & ${\rm H}(j_0^{(l)}) \Big( \prod_{k\in \widetilde{\mathcal{T}}^{(l)}}{\rm CNOT}(j_0^{(l)},k)\Big) $ \\
$M^{(l)}_{\rm Im}$ & ${\rm H}(j^{(l)}_0) \Big( \prod_{k\in \widetilde{\mathcal{T}}^{(l)}}{\rm CNOT}(j_0^{(l)},k) \Big){\rm S}^\dagger(j^{(l)}_0)$ \\
$\widetilde{\mathcal{T}}^{(l)}$ &  $\{j|l_j=1\}\setminus\{j^{(l)}_0\}$\\
${\rm H}(j)$ & Hadamard gate acting on the $j$th qubit \\
${\rm S}(j)$ & phase gate acting on the $j$th qubit \\
${\rm CNOT}(j,k)$ & CNOT gate whose control and target bits are the $j$th and $k$th qubits\\
\bottomrule
\end{tabular}
\end{table*}

\red{
We now generalize the construction of $\langle \varphi | A | \varphi \rangle$ for arbitrary number of qubits.
}
Let $A\in{\mathbb C}^{2^n\times 2^n}$ be an arbitrary matrix and $|\state\rangle\in{\mathbb C}^{2^n}$ be an arbitrary $n$-qubit quantum state.
The goal is to express $\langle \state |A|\state \rangle$ in the form ${\mathbb{E}}[\hat{a}]=\sum_i a_i \, P(\hat{a}=a_i)$, where $\hat{a}$ is a complex random variable, $P$ is a probability distribution on the bit-strings obtained from a quantum circuit measurement, and $a_i$ is an outcome, which is calculated from a measured bit-string and the components of $A$.
The more general form, $\langle \psi_0 | A | \psi_1 \rangle$ (where $|\psi_0\rangle \neq |\psi_1\rangle$), can also be expressed in the same form (see Appendix~\ref{app:derivation}).

\red{Here, we show a brief derivation of the closed-form expression of $\langle \state |A|\state \rangle$. In the subsequent paragraphs, we give an intuitive description of the derivation (see Appendix~\ref{app:derivation} for a detailed derivation).}
In this paper, the matrix $A$ is expressed as
\begin{align}
\label{eq:Adecomp}
A=\sum_{b,c}A_{bc}|b\rangle \langle c|
\end{align}
where $A_{bc}$ is the $(b,c)$-component of $A$.
Using this representation, we give the closed-form expression of $\langle \varphi | A | \varphi \rangle$ derived in this study\red{, where the notations appearing in Eq.~\eqref{eq:exp} are defined in Table~\ref{tab:def}}:
\begin{widetext}
\newcommand{\vs}{0mm}
\begin{align}
\label{eq:exp}
\langle \state |A|\state \rangle
&= \sum^{2^n-1}_{b=0} A_{bb} |\langle b | \state \rangle |^2  + \sum^{2^n-1}_{b=0}\sum^{2^n-1}_{c=0,c\neq b} A_{bc} \Big[ {\rm Re}\big( \langle \varphi | b \rangle \langle c | \varphi \big) + {\rm i} \, {\rm Im}\big( \langle \varphi | b \rangle \langle c | \varphi \big) \Big] \nonumber \\[\vs]
&= \sum^{2^n-1}_{b=0} A_{bb} |\langle b | \state \rangle |^2 
+ \sum^{2^n-1}_{b=0}\sum^{2^n-1}_{c=0, c\neq b} A_{bc} \Bigg[
\frac{1}{2} \Bigg(\Bigg|\frac{\langle b | + \langle c |}{\sqrt{2}} | \varphi \rangle \Bigg|^2 - \Bigg|\frac{\langle b | - \langle c |}{\sqrt{2}} | \varphi \rangle \Bigg|^2\Bigg) \nonumber \\[0mm]
&{\hspace{20mm}}
+\frac{{\rm i}}{2} \Bigg(\Bigg|\frac{\langle b | - {\rm i} \langle c |}{\sqrt{2}} | \varphi \rangle \Bigg|^2 - \Bigg|\frac{\langle b | + {\rm i} \langle c |}{\sqrt{2}} | \varphi \rangle \Bigg|^2\Bigg)
\Bigg]\nonumber \\[\vs]
&= \sum^{2^n-1}_{b=0} A_{bb} |\langle b | \state \rangle |^2 
+ \sum^{2^n-1}_{b=0}\sum^{2^n-1}_{c=0,c\neq b} A_{bc} \Bigg[
\frac{1}{2} \Big( \big|\langle b | M^{(b,c)}_{\rm Re} | \varphi \rangle \big|^2 - \big|\langle b \oplus 2^{j^{(b,c)}_0} | M^{(b,c)}_{\rm Re} | \varphi \rangle \big|^2 \Big) \nonumber \\[0mm]
&{\hspace{20mm}} +\frac{{\rm i}}{2} \Big( \big|\langle b | M^{(b,c)}_{\rm Im} | \varphi \rangle \big|^2 - \big|\langle b \oplus 2^{j^{(b,c)}_0} | M^{(b,c)}_{\rm Im} | \varphi \rangle \big|^2 \Big)
\Bigg]\nonumber \\[\vs]
&= \sum^{2^n-1}_{b=0} A_{bb} |\langle b | \state \rangle |^2 
+ \sum_{l\in \mathcal{S}(A)\red{\setminus\{0^n\}}}\sum^{2^{n}-1}_{b=0} \Bigg((-1)^w A_{(b'c')} |\langle b | M^{(l)}_{\rm Re}| \state \rangle |^2 + {\rm i}(-1)^w A_{[b'c']} |\langle b | M^{(l)}_{\rm Im}| \state \rangle |^2\Bigg)\nonumber\\[\vs]
&= \sum_{l\in\mathcal{S}(A)} \sum_{s\in\{{\rm Re},{\rm Im}\}} \sum^{2^n-1}_{b=0} a(l,s,b)\, p(l,s,b)
\end{align}
\end{widetext}
where $a(l,s,b)$ and $p(l,s,b)$ are respectively defined as
\begin{align}
\label{eq:coeff}
a(l,s,b):=\left\{
        \begin{array}{ll}
            \displaystyle{ \frac{A_{bb}}{p(l,s)} } & l=0^n \\[5mm]
            \displaystyle{ \frac{(-1)^w A_{(b'c')}}{p(l,s)} } & l\neq 0^n,\ s={\rm Re}  \\[5mm]
            \displaystyle{ \frac{{\rm i}(-1)^w A_{[b'c']}}{p(l,s)} } & l\neq 0^n, \ s={\rm Im}
        \end{array}
    \right. 
\end{align}
and
\begin{align}
p(l,s,b) := p(l,s)|\langle b | M^{(l)}_s | \state \rangle |^2.
\end{align}
Note that $M^{(0^n)}_{s}=I$ and $p(l=0^n,s)=p(l=0^n)$ in this representation.
$p(l,s)$ is the probability of selecting the measurement operator $M^{(l)}_s$, which can be arbitrarily defined.

\paragraph{Expression of $A$ (Lines 1--2 of Eq.~\eqref{eq:exp})}
\ 

Unlike existing Pauli-measurement-based methods, Pauli bases are not used to represent $A$ as shown in Eq.~\eqref{eq:Adecomp}.
This means that the Pauli-measurement is no longer used to evaluate the expectation value.
Instead, the real and imaginary parts of $\langle \varphi |b\rangle \langle c| \varphi \rangle$ are directly measured in the present method.
Both the real and imaginary parts of $\langle \varphi |b\rangle \langle c| \varphi \rangle$ can be expressed as the projection of the state $|\varphi\rangle$ on the superposition state of $|b\rangle$ and $|c\rangle$, as shown in the second line of Eq.~\eqref{eq:exp}.

\paragraph{Measurement Operators (Lines 2--3 of Eq.~\eqref{eq:exp})}
A basis transformation is required to evaluate $|(1/\sqrt{2})(\langle b | \pm \langle c|)|\varphi\rangle|^2$ and $|(1/\sqrt{2})(\langle b | \pm {\rm i}\langle c|)|\varphi\rangle|^2$ using measurements in the computational basis.
Fortunately, this basis transformation can be carried out using simple measurement operators defined as
\begin{align}
\label{eq:defMRe}
\frac{|b\rangle+|c\rangle}{\sqrt{2}} &:= \big(M^{(b,c)}_{\rm Re}\big)^\dagger |b\rangle \\[5mm]
\label{eq:defMIm}
\frac{|b\rangle+{\rm i}|c\rangle}{\sqrt{2}} &:= \big(M^{(b,c)}_{\rm Im}\big)^\dagger |b\rangle .
\end{align}
Concrete circuit representations of these measurement operators are given in Table~\ref{tab:def}.
A nice property of these measurement operators is that each of them contains at most $n-1$ CNOT gates, which is suitable for the NISQ devices.
We refer to the present method as the \red{\emph{extended Bell measurements (XBMs)}} because the measurement operators $M^{(b,c)}_{\rm Re}$ and $M^{(b,c)}_{\rm Im}$ are similar to those for the Bell measurements except for the number of CNOT gates.
\begin{figure}[tb]
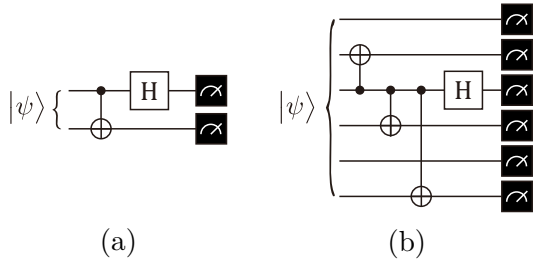

\red{
    \centering
    \begin{minipage}{0.45\hsize}
        \centering
        \includegraphics[scale=1]{bell-meas.pdf}\\
        (a)
    \end{minipage}
    \begin{minipage}{0.45\hsize}
        \centering
        \includegraphics[scale=1]{gbm.pdf}\\
        (b)
    \end{minipage}
    \\[5mm]
    \caption{Typical circuits for (a) standard Bell measurements, and (b) extended Bell measurements.}
    \label{fig:bell}
}
\end{figure}
\red{
In Fig.\ref{fig:bell}, typical circuits for a standard Bell measurements and the extended Bell measurements are shown.
}
\paragraph{Simultaneous Measurements (Lines 3--4 of Eq.~\eqref{eq:exp})}
We proved the following theorem.
\begin{theorem}
\label{theorem:simul}
If $b\oplus c = b' \oplus c'$ and $b'\neq c'$, then $M^{(b,c)}_{\rm Re}=M^{(b',c')}_{\rm Re}$ and $M^{(b,c)}_{\rm Im}=M^{(b',c')}_{\rm Im}$.
\end{theorem}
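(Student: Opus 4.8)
The plan is to show that each measurement operator is a function of $l:=b\oplus c$ alone, after which the hypothesis $b\oplus c=b'\oplus c'$ immediately yields $M^{(b,c)}_s=M^{(b',c')}_s$ for $s\in\{{\rm Re},{\rm Im}\}$. The starting observation is that $|b\rangle$ and $|c\rangle$ differ exactly in the bit positions $j$ with $l_j=1$; hence the most significant differing position is $j_0^{(l)}=\max\{j\mid l_j=1\}$ and the remaining differing positions are $\widetilde{\mathcal T}^{(l)}=\{j\mid l_j=1\}\setminus\{j_0^{(l)}\}$. Both sets are determined by $l$, and the circuits in Table~\ref{tab:def} are assembled purely from them: one Hadamard ${\rm H}(j_0^{(l)})$, the gates ${\rm CNOT}(j_0^{(l)},k)$ for $k\in\widetilde{\mathcal T}^{(l)}$, and, in the imaginary case, a trailing ${\rm S}^\dagger(j_0^{(l)})$. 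Thus the reduction to ``$M^{(b,c)}_s$ depends on $l$ only'' is the whole content of the theorem.

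What remains is to certify that this $l$-indexed circuit really is the operator fixed by the defining relations \eqref{eq:defMRe} and \eqref{eq:defMIm}, i.e.\ that $(M^{(l)}_{\rm Re})^\dagger|b\rangle=(|b\rangle+|c\rangle)/\sqrt2$ and $(M^{(l)}_{\rm Im})^\dagger|b\rangle=(|b\rangle+{\rm i}|c\rangle)/\sqrt2$ for the canonical representative with $b_{j_0^{(l)}}=0$ (so that $b<c$). I would verify this by propagating $|b\rangle$ through the gates, using that ${\rm H}$ and ${\rm CNOT}$ are Hermitian and that the CNOTs sharing the control $j_0^{(l)}$ commute, so that $(M^{(l)}_{\rm Re})^\dagger$ acts as ${\rm H}(j_0^{(l)})$ followed by those CNOTs. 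The Hadamard produces $\tfrac1{\sqrt2}\big(|b\rangle+|b\oplus 2^{j_0^{(l)}}\rangle\big)$; the CNOTs affect only the $|b\oplus 2^{j_0^{(l)}}\rangle$ branch, whose control bit is $1$, flipping the targets $\widetilde{\mathcal T}^{(l)}$, and since $2^{j_0^{(l)}}\oplus\bigoplus_{k\in\widetilde{\mathcal T}^{(l)}}2^k=l$ this branch becomes $|b\oplus l\rangle=|c\rangle$. This establishes \eqref{eq:defMRe}; for \eqref{eq:defMIm} the trailing ${\rm S}(j_0^{(l)})$ in $(M^{(l)}_{\rm Im})^\dagger$ multiplies precisely the $|c\rangle$ branch, whose $j_0^{(l)}$-th bit is $1$, by ${\rm i}$, giving $(|b\rangle+{\rm i}|c\rangle)/\sqrt2$.

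The step needing the most care is the bookkeeping of orientation: the relations \eqref{eq:defMRe}--\eqref{eq:defMIm} are naturally phrased for the representative with control bit $0$, whereas the theorem ranges over all pairs with the given $l$. I would reconcile the non-canonical case $b>c$ through the $(b',c',w)$ substitution of Table~\ref{tab:def}, which returns it to the canonical pair, together with the symmetries ${\rm Re}\big(\langle\varphi|b\rangle\langle c|\varphi\rangle\big)={\rm Re}\big(\langle\varphi|c\rangle\langle b|\varphi\rangle\big)$ and ${\rm Im}\big(\langle\varphi|b\rangle\langle c|\varphi\rangle\big)=-{\rm Im}\big(\langle\varphi|c\rangle\langle b|\varphi\rangle\big)$ already invoked between lines~3 and~4 of Eq.~\eqref{eq:exp}; this is what ensures that a single circuit per $l$ serves every matrix element $A_{bc}$ with $b\oplus c=l$. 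Once both $M^{(b,c)}_s$ and $M^{(b',c')}_s$ are identified with the common $l$-circuit, their equality as operators---not merely on the vector $|b\rangle$---is automatic, and the theorem follows.
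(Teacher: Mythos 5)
Your proof is correct and follows essentially the same route as the paper's: reduce to the canonical representative with $b<c$ (handling $b>c$ via the ${\rm Re}/{\rm Im}$ exchange symmetries), observe that the most significant differing bit $j_0$ and the set of remaining differing bits are determined by $l=b\oplus c$ alone, and conclude that the gate decompositions in Table~\ref{tab:def} coincide. Your explicit gate-propagation check that the $l$-indexed circuit satisfies Eqs.~\eqref{eq:defMRe}--\eqref{eq:defMIm} is sound but is material the paper establishes separately in its derivation of the measurement operators (Appendix~\ref{app:derivation}) rather than inside the proof of Theorem~\ref{theorem:simul}.
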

The proof is given in Appendix~\ref{sec:proof-simul}.
The above theorem states that the number of distinct measurement operators can be drastically reduced.
In addition, the superscript $(\bullet)^{(b,c)}$ can be replaced with $(\bullet)^{(b\oplus c)}$ because each distinct measurement operator is identified by $b\oplus c$.
Reformulation leads to the fourth line of Eq.~\eqref{eq:exp}.

\paragraph{Operator Selecting Model (Lines 4--5 of Eq.~\eqref{eq:exp})}
The fourth line of Eq.~\eqref{eq:exp} does not seem to express an expectation value because it does not take the form $\sum_i a_i P(\hat{a}=a_i)$.
Here, we introduce an operator selecting model $p(l,s)$ that represents the probability of selecting the measurement operator $M^{(l)}_s$, where $l\in\mathcal{S}(A)$ and $s\in \{{\rm Re}, {\rm Im}\}$.
Here, $\mathcal{S}(A)$ is a set of $b\oplus c$ where $A_{bc}$ has non-zero components, as defined in Table~\ref{tab:def}.
Introducing $p(l,s)$, $\langle \varphi | A | \varphi \rangle$ is eventually represented as on the rightmost side of Eq.~\eqref{eq:exp} and takes the form $\sum_i a_i P(\hat{a}=a_i)$.
Although the concrete description of $p(l,s)$ is arbitrary the appropriate definition of $p(l,s)$ reduces the variance of the expectation estimation.

When we regard Eq.~\eqref{eq:exp} as expressing the expectation value in a stochastic process, an event is selecting a measurement operator $M^{(l)}_s$ with probability $p(l,s)$ followed by measuring a bit-string $b$ of a state $M^{(l)}_s|\state\rangle$ in the computational basis, and then obtaining a corresponding outcome.
Denoting a complex random variable corresponding to such an event as $\hat{a}$, we have
\begin{align}
{\mathbb E}_{l,s,b\sim p(l,s,b)}[\hat{a}]=\langle \state |A|\state \rangle.
\end{align}
In the following, we simply express ${\mathbb E}_{l,s,b\sim p(l,s,b)}[\hat{a}]$ as ${\mathbb E}_{p}[\hat{a}]$.

\paragraph{Significance of Eq.~\eqref{eq:exp}}
There are three important aspects to the expression in Eq.~\eqref{eq:exp}.
\begin{enumerate}
\item There are fewer distinct measurement operators.
\item Each measurement operator has a simple explicit expression.
\item There are fewer CNOT gates in each measurement operator.
\end{enumerate}
Equation~\eqref{eq:exp} contains only \red{$|\mathcal{S}(A)\times \{{\rm Re},{\rm Im}\}|$} measurement operators, $I$, $\{M^{(l)}_{\rm Re}\}_{l \in \mathcal{S}(A)\red{\setminus\{0^n\}}}$, and $\{M^{(l)}_{\rm Im}\}_{l \in \mathcal{S}(A)\red{\setminus\{0^n\}}}$.
The number of distinct measurement operators depends on the structure of $A$, which is discussed further in Sec.~\ref{sec:algorithm}.
A simple explicit expression of each measurement operator is beneficial from the viewpoint of time complexity for circuit construction.
The number of CNOT gates included in each measurement operator is at most $n-1$, which is suitable for the NISQ device because most of the real-device error comes from \red{two-qubit entangling gates}.
In summary, qualitatively speaking, the construction of measurement operators is fast and the evaluation error on a real quantum computer using these operators is expected to be small.
Further details of the complexity analyses are given in Sec.~\ref{sec:algorithm}.

\subsection{Upper Bound of Variance}
An upper bound of the variance of $\hat{a}$ associated with the XBM can be easily calculated as
\begin{align}
{\rm Var}_p[\hat{a}]
&={\mathbb E}_p[|\hat{a}|^2] - |{\mathbb E}_p[\hat{a}]|^2\nonumber\\
&\le{\mathbb E}_p[|\hat{a}|^2]\nonumber\\
&=\sum_{l,s,b}|a(l,s,b)|^2p(l,s,b),
\end{align}
which depends on the operator selecting model $p(l,s)$.
A simple choice of $p(l,s)$ is a uniform distribution over $l$ and $s$ whereas a more sophisticated choice can reduce the variance of $\hat{a}$ (see Appendix~\ref{app:variance-derivation} for details).

Here, we assume that \red{$|A|^2_{\max}=O({\rm tr}(A^2)/q)$ where $|A|_{\max}:= \max_{b,c}\{|A_{bc}|\}$ is the maximum norm of $A$ and $q$ the number of non-zero components of $A$}, respectively.
Under this assumption, no matter which operator selecting model is used, the upper bound of the variance of $\hat{a}$ can be roughly estimated as
\begin{align}
    \red{{\rm Var}_{p}[\hat{a}] \le m^2 |A|^2_{\max} = O \Bigg( \frac{m^2}{q} {\rm tr}(A^2)\Bigg)}
\end{align}
where
\begin{align}
    \red{m:=|\mathcal{S}(A) \times \{{\rm Re}, {\rm Im}\}|}
\end{align}
is the number of distinct measurement operators (see Appendix~\ref{app:var-shadow} for details).

\subsection{Relationship with the Classical Shadow}
Our method is closely related to the classical shadow with random Clifford measurements~\cite{shadow}.
The main differences are the unitary ensemble and the operator selecting model.
The random Clifford circuits are used in the classical shadow whereas $m$ distinct measurement operators are used in the XBM.
In addition, only the uniform operator selecting model is considered in the original classical shadow whereas an arbitrary operator selecting model is considered in the XBM.

Additionally, the number of \red{two-qubit entangling gates} in each measurement operator is different.
The classical shadow with random Clifford measurements requires $O(n^2/\log(n))$ \red{two-qubit entangling gates} for each measurement operator whereas the XBM requires at most $n-1$ CNOT gates for each.

\red{
The upper bound of the variance associated with the classical shadow with the random Clifford measurements is $O({\rm tr}(A^2))$~\cite{shadow}.
Hence, when $m^2 \ll q$, our method has an upper bound tighter than that of the classical shadow.
An example of this case is that non-zero components of $A$ are densely filled in the tight bandwidth.
When the bandwidth of $A$ is $k=O(n^c)$, and the non-zero components are densely filled in the band,
\begin{align}
    \label{eq:var2}
    {\rm Var}_{p}\big[ \hat{a} \big]
    = O\Bigg(\frac{n^{c+1}}{2^n}{\rm tr}(A^2)\Bigg)
\end{align}
(see Appendix~\ref{app:var-shadow} for other cases).
Such a matrix frequently appear in the case of the FEM as mentioned in the introduction.
In this case, the adoption of our method exponentially improves the variance relative to the classical shadow.
}

\red{
The relationship between the XBM and classical shadow is further discussed in Appendix~\ref{app:rel-shadow}, which gives the explicit form of the classical shadow associated with the XBM method.
}

\section{Algorithm and Complexity}
\label{sec:algorithm}

\begin{algorithm*}[tb]
    \caption{Algorithm for term grouping in the XBM method}
    \label{alg:grouping}
    \SetKwInOut{Input}{input}
    \SetKwInOut{Output}{output}
    \SetKwInOut{Initialize}{initialize}
    \Input{Non-zero matrix elements and corresponding indices, $\{b,c,A_{bc}|A_{bc}\neq 0\}$}
    \Output{Measurement operators $M^{(l)}_s$ and outcomes $\alpha(l,s,b)$}
    \Initialize{$M^{(l)}_s\leftarrow\texttt{None}$ and $\alpha(l,s,b)\leftarrow 0$ for all $l,s,b$}
    \For{$(b,c) \in \{(b,c)|A_{bc}\neq0\}$} {
        $l \leftarrow b \oplus c$\\
        \If{$l\neq0^n$}{
            $j^{(l)}_0\leftarrow \max \{j|l_j=1\}$\\
        }
        \tcc{Compute and store measurement operators} 
        \If{Measurement operators $M^{(l)}_{\rm Re}$ and $M^{(l)}_{\rm Im}$ have not yet been computed}{
            \uIf{$l=0^n$}{
                $M^{(l)}_{\rm Re} \leftarrow I$
            }
            \Else{
                $M^{(l)}_{\rm Re},\,M^{(l)}_{\rm Im} \leftarrow$ (see Table \ref{tab:def}) \\
            }
        }
        \tcc{Compute and store outcomes}
        \uIf{$b=c$}{
            $\alpha(l,{\rm Re},b) \leftarrow \alpha(l,{\rm Re},b) + A_{bc}$ \\
        }
        \uElseIf{$b<c$}{
            $\bar{b} \leftarrow b \oplus 2^{j^{(l)}_0}$\\
            $\alpha(l,{\rm Re},b) \leftarrow \alpha(l,{\rm Re},b) + A_{bc}/2$ \\
            $\alpha(l,{\rm Re},\bar{b}) \leftarrow \alpha(l,{\rm Re},\bar{b}) -  A_{bc}/2$ \\
            $\alpha(l,{\rm Im},b) \leftarrow \alpha(l,{\rm Im},b) +  {\rm i}A_{bc}/2$ \\
            $\alpha(l,{\rm Im},\bar{b}) \leftarrow \alpha(l,{\rm Im},\bar{b}) -  {\rm i}A_{bc}/2$ \\
        }
        \ElseIf{$b>c$}{
            $\bar{c} \leftarrow c \oplus 2^{j^{(l)}_0}$\\
            $\alpha(l,{\rm Re},c) \leftarrow \alpha(l,{\rm Re},c) + A_{bc}/2$ \\
            $\alpha(l,{\rm Re},\bar{c}) \leftarrow \alpha(l,{\rm Re},\bar{c}) -  A_{bc}/2$ \\
            $\alpha(l,{\rm Im},c) \leftarrow \alpha(l,{\rm Im},c) -  {\rm i} A_{bc}/2$ \\
            $\alpha(l,{\rm Im},\bar{c}) \leftarrow \alpha(l,{\rm Im},\bar{c}) +  {\rm i} A_{bc}/2$ \\
        }
    }
\end{algorithm*}

\begin{algorithm*}[tbh]
    \caption{Algorithm for calculating an expectation value in the XBM method. \red{Note that median-of-means or others can be used instead of the arithmetic mean in the last line.}}
    \label{alg:expectation}
    \SetKwInOut{Input}{input}
    \SetKwInOut{Output}{output}
    \SetKwInOut{Initialize}{initialize}
    \Input{Number of shots $N$, operator selecting model $p(l,s)$, measurement operators $\{M^{(l)}_s\}$, quantum state $|\varphi\big>$,  outcomes $\{\alpha(l,s,b)\}$}
    \Output{Expectation value $x$}
    \For{$i=1$ \KwTo $N$}{
        \tcc{Pick a measurement operator}
        $l, s \sim p(l,s)$ \tcp*{Classical computing}
        $M \leftarrow M^{(l)}_s$ \tcp*{Access to the classical memory}
        \tcc{Measure a bit string and update $x$}
        $b \sim |\langle b | M | \varphi \rangle|^2$ \tcp*{Quantum computing}
        $x_i \leftarrow \alpha(l,s,b)$  \tcp*{Access to the classical memory and classical computing}
    }
    \red{$x \leftarrow \sum_i x_i / N$} \tcp*{\red{Take arithmetic mean}}
\end{algorithm*}

The algorithms for term grouping and calculating an expectation value using the XBM method are presented as Algorithms~\ref{alg:grouping} and \ref{alg:expectation}, respectively.
These algorithms follow the equations described in Eq.~\eqref{eq:exp} (see Appendix~\ref{app:derivation} for details).
The greatest advantage of the proposed method, the XBM method, is that the number of unique circuits required to estimate $\big<\psi_0\big|A\big|\psi_1\big>$ can be rapidly reduced.
In this section, the reduction ratio of the number of unique circuits and the computational cost of the XBM method are discussed.

\subsection{Number of Distinct Operators}
In this paper, an $s$-sparse matrix and band matrix with bandwidth $k$ are defined as follows.

\begin{definition}[$s$-sparse matrix $A$~\cite{dewolf}]
$A\in{\mathbb C}^{2^n \times 2^n}$ is $s$-sparse if each column has at most $s$ nonzero entries.
\end{definition}

\begin{definition}[band matrix $A$]
$A\in{\mathbb C}^{2^n \times 2^n}$ is a band matrix with bandwidth $k$ if $A_{ij}=0$ for all $i,j$ such that $|i-j|>k$.
\end{definition}

According to these definitions, the following theorems hold.

\begin{theorem}[$s$-sparse matrix $A$]
    \label{theo:1}
    Let $A\in{\mathbb C}^{2^n\times 2^n}$ and $\big|\psi_0\big>,\big|\psi_1\big>\in{\mathbb C}^{2^n}$. When $A$ is an $s$-sparse matrix, there exists an algorithm such that the number of quantum circuits required to evaluate $\big<\psi_0\big|A\big|\psi_1\big>$ is $2^{n+1}$ in the worst case.
\end{theorem}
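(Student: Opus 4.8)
The plan is to bound the number of unique circuits by counting the distinct measurement operators generated by Algorithm~\ref{alg:grouping}, since each operator $M^{(l)}_s$, composed with the fixed state preparation, defines exactly one circuit to be run. The whole argument rests on Theorem~\ref{theorem:simul}, which collapses the labelling of measurement operators from a pair $(b,c)$ down to the single index $l=b\oplus c\in\mathcal{S}(A)$: all off-diagonal terms sharing a common $b\oplus c$ are measured by the very same pair $M^{(l)}_{\rm Re},\,M^{(l)}_{\rm Im}$.

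First I would observe that $l=b\oplus c$ is an $n$-bit string and therefore ranges over at most $2^n$ values, giving $|\mathcal{S}(A)|\le 2^n$ regardless of the sparsity parameter $s$. Next I would note that each label $l\ne 0^n$ contributes at most two operators ($M^{(l)}_{\rm Re}$ and $M^{(l)}_{\rm Im}$), while the diagonal label $l=0^n$ contributes only the identity; upper-bounding uniformly by two operators per label yields at most $2\cdot 2^n = 2^{n+1}$ distinct operators, hence at most $2^{n+1}$ unique circuits. To justify calling this the worst case, I would exhibit an $s$-sparse (indeed $1$-sparse) witness that saturates $|\mathcal{S}(A)|=2^n$: take the matrix whose only nonzero entries lie in row $0$, so each column carries a single nonzero and $l=0\oplus c=c$ sweeps all of $\{0,\dots,2^n-1\}$.

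Finally, because the statement concerns the asymmetric form $\langle\psi_0|A|\psi_1\rangle$, I would appeal to the derivation of Appendix~\ref{app:derivation} to confirm that the asymmetric case partitions its terms by the same index $l=b\oplus c$ and so reuses the identical family of operators, leaving the count unchanged. This last point is the main obstacle: the elementary counting above is immediate once one grants that relaxing $|\psi_0\rangle=|\psi_1\rangle$ introduces no new operators---in particular no separate operators for the bra and ket sides---and it is precisely the appendix construction, rather than the counting, that must certify this. Everything else follows directly from Theorem~\ref{theorem:simul} together with the $n$-bit range of $b\oplus c$.
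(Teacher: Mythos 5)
Your argument is correct under the paper's own definition of sparsity, and its skeleton -- the upper bound $|\mathcal{S}(A)\times\{{\rm Re},{\rm Im}\}|\le 2\cdot 2^n=2^{n+1}$ via Theorem~\ref{theorem:simul}, plus an explicit $1$-sparse witness realizing all $2^n$ labels -- is the same as the paper's; the genuine difference is the witness. The paper constructs its worst-case instance by a recursive copy-and-shift procedure (Fig.~\ref{fig:sparse}) that produces a matrix with exactly one nonzero entry per row \emph{and} per column while still realizing every color $b\oplus c$, whereas you place all nonzeros in row $0$, so that $l=0\oplus c=c$ trivially sweeps $\{0,\dots,2^n-1\}$. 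Your witness is simpler and verifies in one line, but it is $1$-sparse only under the paper's literal column-count definition: row $0$ of your matrix contains $2^n$ nonzero entries, so under the stronger and more standard reading of $s$-sparsity (at most $s$ nonzeros per row and per column, which is automatic in the Hermitian setting the cited source has in mind), your witness fails while the paper's permutation-type pattern survives either reading. Two smaller points. First, your uniform bound of two operators per label over-counts the diagonal label $0^n$, which needs only the identity circuit; consequently your witness yields $2^{n+1}-1$ circuits when $|\psi_0\rangle=|\psi_1\rangle$, and the exact value $2^{n+1}$ is attained only in the asymmetric case, where the embedding of Eq.~\eqref{eq:defAprime} shifts every label to $l+2^n\neq 0^{n+1}$ and forces both the ${\rm Re}$ and ${\rm Im}$ operators for all $2^n$ labels (the paper's proof shares this subtlety and leaves it equally implicit). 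Second, your deferral to Appendix~\ref{app:derivation} for the asymmetric case is sound: since $b\oplus(c+2^n)=(b\oplus c)+2^n$ is a bijection on labels, the number of labels -- though not literally the operators themselves, which act on $n+1$ qubits -- is preserved, so the count stays at most $2\cdot 2^n$.
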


\begin{theorem}[band matrix $A$]
    \label{theo:2}
    Let $A\in{\mathbb C}^{2^n\times 2^n}$ and $\big|\psi_0\big>,\big|\psi_1\big>\in{\mathbb C}^{2^n}$. When the bandwidth of $A$ is $k>0$, there exists an algorithm such that the number of quantum circuits required to evaluate $\big<\psi_0\big|A\big|\psi_1\big>$ is $2((n-r)k+2^r)$ in the worst case where $r=\lceil \log_2k \rceil$.
\end{theorem}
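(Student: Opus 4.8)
The plan is to reduce the circuit count to a purely combinatorial count of the set $\mathcal{S}(A)=\{b\oplus c \mid A_{bc}\neq 0\}$. By Theorem~\ref{theorem:simul} each non-diagonal pair $(b,c)$ contributes only the two operators $M^{(l)}_{\rm Re},M^{(l)}_{\rm Im}$ depending on $l=b\oplus c$, so exactly as in the proof of Theorem~\ref{theo:1} the number of distinct circuits is $m=2|\mathcal{S}(A)|$. It therefore suffices to show that in the worst case (the band densely filled, so that every admissible $l$ actually occurs) $|\mathcal{S}(A)|=(n-r)k+2^r$ with $r=\lceil\log_2 k\rceil$. Here the admissible $l$ are precisely those for which there exist $b,c\in\{0,\dots,2^n-1\}$ with $b\oplus c=l$ and $|b-c|\le k$.

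First I would determine, for a fixed $l$ with top set bit $j^{(l)}_0$, the minimum of $|b-c|$ over all pairs with $b\oplus c=l$. Writing the signed difference as $b-c=\sum_{j:\,l_j=1}\epsilon_j 2^{j}$ with $\epsilon_j\in\{+1,-1\}$, the top term $2^{j^{(l)}_0}$ dominates the sum of all lower terms, so the minimum is attained by choosing $+$ at the top bit and $-$ at every lower set bit, giving $\min|b-c|=2^{j^{(l)}_0}-(l-2^{j^{(l)}_0})$. I would then exhibit a concrete minimiser lying in range and inside the band, so that the count is tight and not merely an upper bound. Consequently $l$ is admissible iff $2^{j^{(l)}_0}-(l-2^{j^{(l)}_0})\le k$.

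Next I would count the admissible $l$ by splitting on the position of the top bit. For $j^{(l)}_0\le r-1$ (equivalently $l<2^r$) the minimal difference is at most $2^{j^{(l)}_0}\le 2^{r-1}<k$, so every such $l$ is admissible, contributing $2^r$ values (including $l=0^n$). For each fixed top-bit position $j\in\{r,\dots,n-1\}$, writing $l=2^{j}+m$ with $0\le m<2^{j}$, admissibility becomes $m\ge 2^{j}-k$, so $m$ ranges over an interval of exactly $k$ integers; there are $n-r$ such $j$, contributing $(n-r)k$. Summing gives $|\mathcal{S}(A)|=2^r+(n-r)k$, and multiplying by $2$ yields $2((n-r)k+2^r)$.

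The hard part will be the two tightness and boundary issues rather than the algebra. First, I must verify the minimum-difference configuration is simultaneously in range ($0\le b,c\le 2^n-1$) and inside the band, so that a densely filled band genuinely realises every admissible $l$ and the worst-case count is exact. Second, the role of $r=\lceil\log_2 k\rceil$ hinges on the inequalities $2^{r-1}<k\le 2^r$; I would treat the boundary cases carefully, in particular $k$ an exact power of two, and large bandwidths where $r\ge n$ (in which case the formula collapses to $2^{n+1}$, recovering Theorem~\ref{theo:1}), to confirm the case split covers every top-bit position without overlap or omission.
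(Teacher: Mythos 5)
Your proposal is correct, and it reaches the count by a genuinely different route than the paper. Both arguments share the same first reduction: by Theorem~\ref{theorem:simul} the unique circuits are indexed by the distinct values of $l=b\oplus c$ over non-zero entries (doubled for the real and imaginary operators), so everything hinges on counting distinct XOR values inside the band $|b-c|\le k$. The paper performs this count pictorially and recursively: it introduces the colored matrix (Appendix~\ref{app:colored}), uses the block structure relating the $2^n\times 2^n$ color pattern to the $2^{n-1}\times 2^{n-1}$ one, and solves the recurrence $N^{(n)}_{k}=1+\sum^k_{i=1}\big(a^{(n)}_i+b^{(n)}_i\big)$ to get $(n-r)k+2^r$. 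You instead characterize arithmetically which $l$ can occur: since $b\oplus c=l$ forces $b-c=\sum_{j:\,l_j=1}\epsilon_j 2^j$ with $\epsilon_j=b_j-c_j\in\{\pm1\}$, and every sign pattern is realizable by an in-range pair (the minimizer $b=2^{j_0^{(l)}}$, $c=l-2^{j_0^{(l)}}$ lies in $\{0,\dots,2^n-1\}$ and inside the band whenever the criterion holds), one gets $\min_{b\oplus c=l}|b-c|=2^{j_0^{(l)}+1}-l$, and counting the $l$ with $2^{j_0^{(l)}+1}-l\le k$ per top-bit position yields $\min(k,2^{j_0})$ values each, summing to $2^r+(n-r)k$. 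At bottom the two proofs count the same quantity per top bit --- the paper's $\sum^k_{i=1}b^{(j)}_i$ equals exactly $\min(k,2^{j-1})$ --- but your minimal-difference lemma replaces the figure-driven recurrence with a closed-form admissibility criterion. What this buys is a self-contained argument in which both the upper bound and its tightness (a densely filled band realizes every admissible $l$) are verifiable without reference to the diagrams, and the boundary cases you flag do work out cleanly: $2^{r-1}<k\le 2^r$ handles $k$ a power of two, $k=1$ gives the $n+1$ XOR values $\{0,1,3,\dots,2^n-1\}$, and $r=n$ collapses the formula to $2^{n+1}$, consistent with Theorem~\ref{theo:1}.
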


Here, we define the upper bound of the number of the unique circuits for an $n$-qubit system with bandwidth $k$ as
\begin{align}
\label{eq:upper-m}
\overline{m}(n,k)
&:=\left\{
        \begin{array}{ll}
            2((n-r)k+2^r)  & k>0 \\[5mm]
            1 & k=0
        \end{array}
    \right.
\end{align}
\red{where $r:=\lceil \log_2k\rceil$.}
Note that when evaluating $\langle \varphi | A | \varphi \rangle$ whose $A$ has non-zero diagonal components, the upper bound can be reduced to $2((n-r)k+2^r)-1$ for $k>1$ (see Appendix~\ref{app:proofs1}).

The proofs are given in Appendix~\ref{app:proofs}.
Note that a fully dense matrix can be viewed as a $2^n$-sparse matrix or a band matrix with bandwidth $k=2^n-1$.

Unfortunately, we find that the evaluation of the number of circuits in Algorithm~\ref{alg:grouping} is tight; i.e.,  there exists an $s$-sparse matrix that requires $2^{n+1}$ unique circuits, which is the same as the case for the fully dense matrix.
However, if $A$ is a band matrix, then at least the number of unique circuits required to estimate $\big<\psi_0\big|A\big|\psi_1\big>$ can be reduced to $2((n-r)k+2^{r})$, where $k$ is the bandwidth and $r=\lceil \log_2k \rceil$.
When $A$ is sufficiently tight (i.e., bandwidth $k = O(n^c)$), $2((n-r)k+2^{r}) = O(n^{c+1})$.
Note that the number of unique Pauli strings required to evaluate $\big<\psi_0\big|A\big|\psi_1\big>$ using na\"ive Pauli measurements is $2^{n+1}((n-r)k+2^{r})$ (see Sec.~\ref{app:proofs3}).
The XBM thus outperforms the na\"ive Pauli measurements in terms of the number of unique circuits required to evaluate $\big<\psi_0\big|A\big|\psi_1\big>$.
This exponential improvement is due to the difference in the basis of $A$, but the number of unique circuits when using the XBM method does not appreciably increase even if $A$ is represented in the Pauli basis (see Sec.~\ref{app:xbmpauli}).

As mentioned before, our method is the generalization of \cite{poisson1}.
In their formulation, both $\langle \varphi |A^2| \varphi \rangle=\langle \varphi |(B+C^{(0)}+C^{(1)})| \varphi \rangle$ and $\langle b |A| \varphi \rangle$ are calculated to solve the one-dimensional Poisson equation, and the numbers of unique circuits required to evaluate $\langle \varphi |A^2| \varphi \rangle=\langle \varphi |(B+C^{(0)}+C^{(1)})| \varphi \rangle$ and $\langle b |A| \varphi \rangle$ are $4n+1$ and $2n+1$, respectively, where the bandwidth of $B$, $C^{(0)}$, $C^{(1)}$ and $A$ are $2$, $0$, $0$ and $1$, respectively.
This is consistent with $\overline{m}(n,k=2)+\overline{m}(n,k=0)+\overline{m}(n,k=0)=4n+1+1=4n+2$ and $\overline{m}(n,k=1)=2n+2$ (see Eq.~\eqref{eq:upper-m}), where the $+1$ difference comes from the counting of an unnecessary term, ${\rm Im}\big(\langle \varphi | b \rangle \langle b | \varphi \rangle \big)=0$.

\subsection{Time Complexity for Term Grouping}

We assume that data $\{(b,c,A_{bc})|A_{bc}\ne0\}$ are stored in the memory of the classical computer, where $A_{bc}$ is the $(b,c)$ component of the matrix $A$.
For each $(b,c)\in\{(b,c)|A_{bc}\neq0\}:=\mathcal{B}$, bit strings to be measured are calculated from $b$ and $j^{(b\oplus c)}_0=\max\{j|(b\oplus c)_j=1\}$ for $b\neq c$.
The computational cost of calculating $j^{(b\oplus c)}_0$ with classical computers is $O(n)$, and the total time complexity for term grouping is thus $O(nd)$, where $d$ is the size of $\mathcal{B}$, which is the number of non-zero elements of $A$.

Note that in Algorithm~\ref{alg:grouping}, measurement operators $M^{(l)}_s$ are simultaneously calculated with term grouping.
Each measurement circuit can be built in $O(n)$ time (see definition of $M^{(l)}_s$ in Table \ref{tab:def}), and its calculation is only carried out when $M^{(l)}_s$ is not yet calculated. Therefore, the total time complexity of building measurement circuits is $O(nm)$ where $m\le d$ is the number of distinct $b\oplus c$ values among $(b,c)$ pairs with non-zero $A_{bc}$ values, which does not affect the total time complexity of term grouping.

\subsection{Gate Counts for Measurement Circuits}

As described in Table \ref{tab:def}, the maximum number of gate counts is $n+1$ for each unique circuit; i.e., there are $n-1$ CNOT gates, one Hadamard gate and one phase gate.
Hence, the gate count for each measurement circuit is $O(n)$.

\section{Experiments}
\label{sec:experiment}
We implement our model with Qiskit version 0.25.0~\cite{qiskit}.
All experiments are carried out on a single Intel(R) Core(TM) i9-9900X CPU @ 3.50GHz
and \texttt{ibm\_kawasaki} of an IBM-Q system.
To compare with the existing methods, QWC (qubit--wise commuting)~\cite{qwc0,efficient,qwc} and GC (general commuting)~\cite{gc} are conducted with the same machine.
The Python function, \texttt{group\_into\_tensor\_product\_basis\_sets}, implemented in OpenFermion~\cite{openfermion}, is used for QWC whereas 
Python code provided by one of the authors of \cite{gc} is used for GC.
In GC, the Bron-Kerbosch algorithm~\cite{bronkerbosch} is used to solve the minimum clique covering problem.

In our method, the XBM method, ${\rm Re}(\big<\varphi\big|b\big>\big<c\big|\varphi\big>)$  and ${\rm Im}(\big<\varphi\big|b\big>\big<c\big|\varphi\big>)$ can be separately evaluated.
To take this advantage, we consider the two following cases in which ${\rm Im}(\big<\varphi\big|b\big>\big<c\big|\varphi\big>)$ does not need to be estimated.
\begin{enumerate}
    \item $A\in{\mathbb R}^{2^n\times 2^n}$ and only ${\rm Re}(\big<\psi_0\big|A\big|\psi_1\big>)$ is required. For example, $A$ is real symmetric matrix and $\big|\psi_0\big>=\big|\psi_1\big>$.
    \item $\exists \phi_0,\phi_1$ s.t. $e^{{\rm i}\phi_0}\big|\psi_0\big>,e^{{\rm i}\phi_1}\big|\psi_1\big>\in{\mathbb R}^{2^n}$
\end{enumerate}
In these cases, only ${\rm Re}(\big<\varphi\big|b\big>\big<c\big|\varphi\big>)$ needs to be estimated because
\begin{align}
    &{\rm Re}\Big(\big<\state\big|A\big|\state\big>\Big) = \sum_{b,c} {\rm Re}\Big(A_{bc}\Big){\rm Re}\Big(\big<\state\big|b\big>\big<c\big|\state\big>\Big) \nonumber \\ 
    &\hspace{13mm}
    - \sum_{b,c} {\rm Im}\Big(A_{bc}\Big){\rm Im}\Big(\big<\state\big|b\big>\big<c\big|\state\big>\Big) \nonumber
\end{align}
and
\begin{align}
    &{\rm Im}\Big(\big<\state\big|A\big|\state\big>\Big) = \sum_{b,c} {\rm Im}\Big(A_{bc}\Big){\rm Re}\Big(\big<\state\big|b\big>\big<c\big|\state\big>\Big) \nonumber \\ 
    &\hspace{13mm}
    + \sum_{b,c} {\rm Re}\Big(A_{bc}\Big){\rm Im}\Big(\big<\state\big|b\big>\big<c\big|\state\big>\Big). \nonumber
\end{align}
The computational cost is reduced by half in these cases and we thus refer these to these cases by ``XBM (half)'' in the following.
Note that in the case that $A\in{\mathbb C}^{2^n\times2^n}$ and $\big|\psi_0\big>,
\big|\psi_1\big>\in{\mathbb C}^{2^n}$, even if $\big|\psi_0\big>=\big|\psi_1\big>$ and $A$ is an Hermitian matrix, which is the case that $\big<\psi_0\big|A\big|\psi_1\big>\in{\mathbb R}$, both ${\rm Re}(\big<\varphi\big|b\big>\big<c\big|\varphi\big>)$ and ${\rm Im}(\big<\varphi\big|b\big>\big<c\big|\varphi\big>)$ have to be evaluated.

$2\big|0\big>\big<1\big|\otimes A$
(Eq.~\eqref{eq:defAprime}) is used instead of $A$ when evaluating the case that $\big|\psi_0\big>\ne\big|\psi_1\big>$.
In QWC and GC, the matrix $A$ is converted to the sum of the Pauli bases before adopting the term grouping algorithms.

\subsection{Term Grouping}

\subsubsection{Setups}
We first show that the present method is more efficient in terms of both the number of unique circuits and the computational cost for term grouping than existing methods.
We conduct the following three experiments:
\begin{enumerate}
\item We compare the number of unique circuits and time required for term grouping between different methods when $\big|\psi_0\big>=\big|\psi_1\big>$. 
\item We repeat the first experiment but for $\big|\psi_0\big>\neq\big|\psi_1\big>$. 
\item We estimate the number of unique circuits for the XBM method when the non-zero elements of $A$ are randomly placed.
\end{enumerate}

In the first and second experiments, all components of $A\in{\mathbb C}^{2^n\times2^n}$ are randomly set to non-zero values.
$A$ is then restricted to be a band matrix by setting $A_{bc}=0$ for $|b-c|>k$, where $k=0,1,\cdots,2^{n-1}$.
In the third experiment, we observe the term groupings while changing the number of non-zero elements, $d$, that are randomly placed in $A$.
The number of unique circuits is obtained 10 times for each value of $d$.
The mean and standard deviation are then calculated.
\subsubsection{Results}
\begin{figure*}[tb]
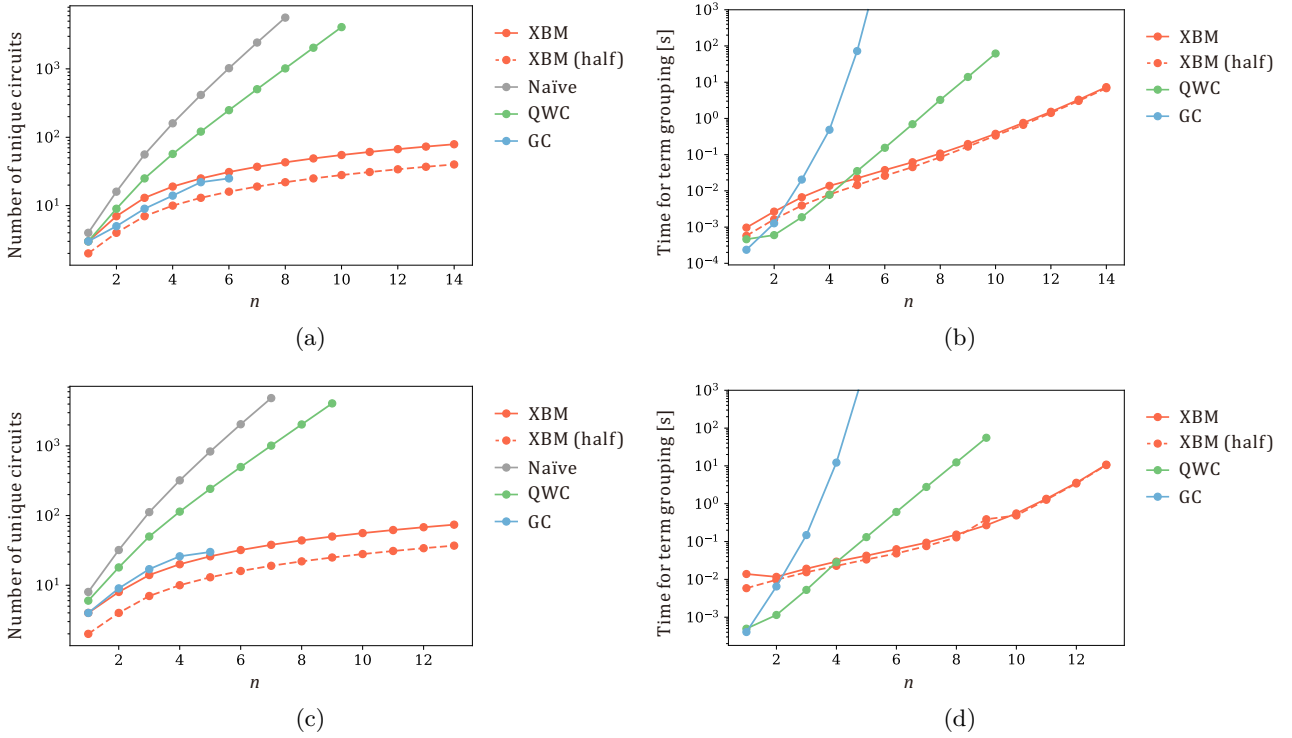

    \centering
    \subfigure[]{
        \centering
        \includegraphics[width=0.48\linewidth]{psiApsi-nbcircs-band3.pdf}
    \label{fig:nbcircs-band}
    }
    \subfigure[]{
        \centering
        \includegraphics[width=0.48\linewidth]{psiApsi-time-band3.pdf}
    \label{fig:time-band}
    }
    \subfigure[]{
        \centering
        \includegraphics[width=0.48\linewidth]{psi0Apsi1-nbcircs-band3.pdf}
    \label{fig:nbcircs-band-psi0psi1}
    }
    \subfigure[]{
        \centering
        \includegraphics[width=0.48\linewidth]{psi0Apsi1-time-band3.pdf}
    \label{fig:time-band-psi0psi1}
    }
    \caption{Number of unique circuits in estimating $\big<\psi_0\big|A\big|\psi_1\big>$ and the time required for term grouping when the bandwidth of $A$ is $k=3$: (a)(b) $\big|\psi_0\big>=\big|\psi_1\big>$; (c)(d) $\big|\psi_0\big>\neq\big|\psi_1\big>$. XBM: present method, XBM (half): present method but only evaluating the real part of $\langle \varphi | b \rangle \langle c | \varphi \rangle$, Na\"ive: Pauli measurements without term grouping, QWC: Pauli measurements with grouping associated with the qubit-wise commutativity~\cite{qwc0,efficient,qwc}, GC: Pauli measurements with grouping associated with the general commutativity~\cite{gc}.}
\end{figure*}
\begin{figure*}[tb]
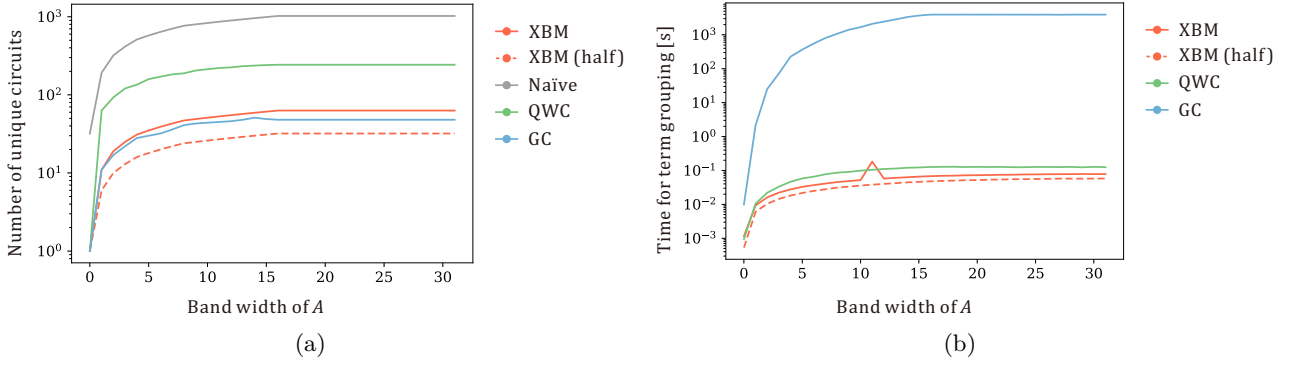

    \centering
    \subfigure[]{
        \centering
        \includegraphics[width=0.48\linewidth]{psiApsi-nbcircs-5qubits.pdf}
    \label{fig:nbcircs-5q}
    }
    \subfigure[]{
        \centering
        \includegraphics[width=0.48\linewidth]{psiApsi-time-5qubits.pdf}
    \label{fig:time-5q}
    }
    \caption{Number of unique circuits in estimating $\big<\psi_0\big|A\big|\psi_0\big>$ and the time required for term grouping with respect to the bandwidth of $A$, for a five-qubit system. XBM: present method, XBM (half): present method but only evaluating the real part of $\langle \varphi | b \rangle \langle c | \varphi \rangle$, Na\"ive: Pauli measurements without term grouping, QWC: Pauli measurements with grouping associated with the qubit-wise commutativity~\cite{qwc0,efficient,qwc}, GC: Pauli measurements with grouping associated with the general commutativity~\cite{gc}.}
    \label{fig:nbcircs-time-5q}
\end{figure*}
\paragraph{Experiment 1}
Figure~\ref{fig:nbcircs-band} shows the dependency of the number of unique circuits on $n$, in the case that $\big|\psi_0\big>=\big|\psi_1\big>$ and the bandwidth $k=3$.
The XBM and GC results have similar tendencies that are much better than the tendencies of Na\"ive and QWC in terms of the reduction ratio of the unique circuits.
When the imaginary part of $\big<\varphi|b\big>\big<c\big|\varphi\big>$ is not required, as in the case of XBM (half), the number of unique circuits is lower for the XBM method than for the GC method.

Figure~\ref{fig:time-band} shows the dependency of the computational time for term grouping on $n$ when the bandwidth $k=3$.
Obviously, the XBM method is much faster than all other methods for large $n$.
This is because GC and QWC methods heuristically solve a minimum clique cover problem that is NP-hard, whereas the XBM method analytically groups the terms as shown in Sec.~\ref{sec:method}.
Note that this qualitative tendency of the comparison results does not change for any bandwidth as shown in Fig.~\ref{fig:nbcircs-5q} and Fig.~\ref{fig:time-5q}.

\paragraph{Experiment 2}
Figures~\ref{fig:nbcircs-band-psi0psi1} and \ref{fig:time-band-psi0psi1} show the results for the case that $\big|\psi_0\big>\neq\big|\psi_1\big>$.
In this case, regardless of whether the imaginary part of $\big<\varphi\big|b\big>\big<c\big|\varphi\big>$ is required, the number of unique circuits is lower for the XBM method than for all other methods.
This is because the number of unique circuits for $\big|0\big>\big<1\big|\otimes 2A$ is the same as that for $A$ for the XBM method, but not for the other methods.
The tendency of the time required for term grouping against $n$ in the case that $\big|\psi_0\big>\ne\big|\psi_1\big>$ is the same as that in the case that $\big|\psi_0\big>=\big|\psi_1\big>$.

\begin{figure}[tb]
    \centering
    \includegraphics[width=\linewidth]{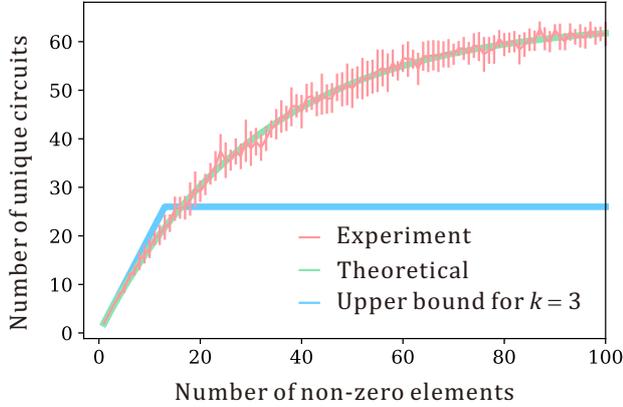}\\
    \caption{Results when the non-zero elements of $A$ are randomly placed for the five-qubit system. Red and green lines are the experimental and analytic results of the XBM, respectively. The blue line is the upper bound when all non-zero elements are placed in a $k=3$ band, $\min(d,\overline{m}(5,3))=\min(d,2((5-\lceil \log_2 3 \rceil)\cdot 3+2^{\lceil \log_2 3 \rceil}))$}
    \label{fig:band-random}
\end{figure}
\paragraph{Experiment 3}
The third experiment is carried out to show that the XBM method is more efficient in terms of the number of unique circuits when $A$ is a band-matrix.
Figure~\ref{fig:band-random} shows the results of the number of unique circuits for the XBM method when the non-zero elements of $A$ are randomly placed; this case is referred to as {\it random-A}.
The upper bound of the number of unique circuits when $A$ is a band-matrix with a bandwidth $k=3$ is also shown in the figure (blue line); this case is referred to as {\it band-A}.

The number of unique circuits required to evaluate {\it random-A} is analytically derived as follows.
Let $A$ have $d$ non-zero elements at random.
Here, we consider a box that contains a total of $N^2$ balls of $N$ types, where $N=2^n$.
In fact, the number of unique circuits is twice the expectation number of ball types when one samples $d$ balls from the box.
The probability that the balls are of type $k$ when $d$ balls are sampled from a box containing a total of $N^2$ balls of $N$ types, $P(k;d,N)$, is
\[\displaystyle{
    P(k;d,N) = \frac{\mathfrak{N}(k,d,N)}{{}_{N^2} {\rm C}_d}
}\]
where
\[\displaystyle{
    \mathfrak{N}(k,d,N) = {}_N{\rm C}_k \ {}_{kN}{\rm C}_d
    \hspace{20mm}
}\]
\[\displaystyle{
    \hspace{20mm}
    - \sum^{k-1}_{i=1}\frac{{}_N{\rm C}_k\  {}_k{\rm C}_i}{{}_N{\rm C}_i} \mathfrak{N}(i,d,N).
}\]
The expectation value is then given as
\[\displaystyle{
    E(d,N) = \sum^{d}_{k=1} k P(k;d,N).
}\]
The green line in Fig.~\ref{fig:band-random} denotes the expectation value $E(d,N=2^5)$ and is consistent with the red line, which is the mean of the experimental results.
The figure shows that the number of unique circuits for {\it random-A} is much larger than that for {\it band-A} when the number of nonzero elements of $A$ is sufficiently large.
For example, when $d=50$, the expectation number of unique circuits for {\it random-A} is approximately 52. 
Meanwhile, the number of unique circuits is at most $26$ when all 50 non-zero elements are put into a bandwidth $k=3$.
This indicates that adoption of the XBM method can reduce the number of unique circuits more efficiently when the bandwidth of $A$ is narrower.

\begin{figure*}[t!]
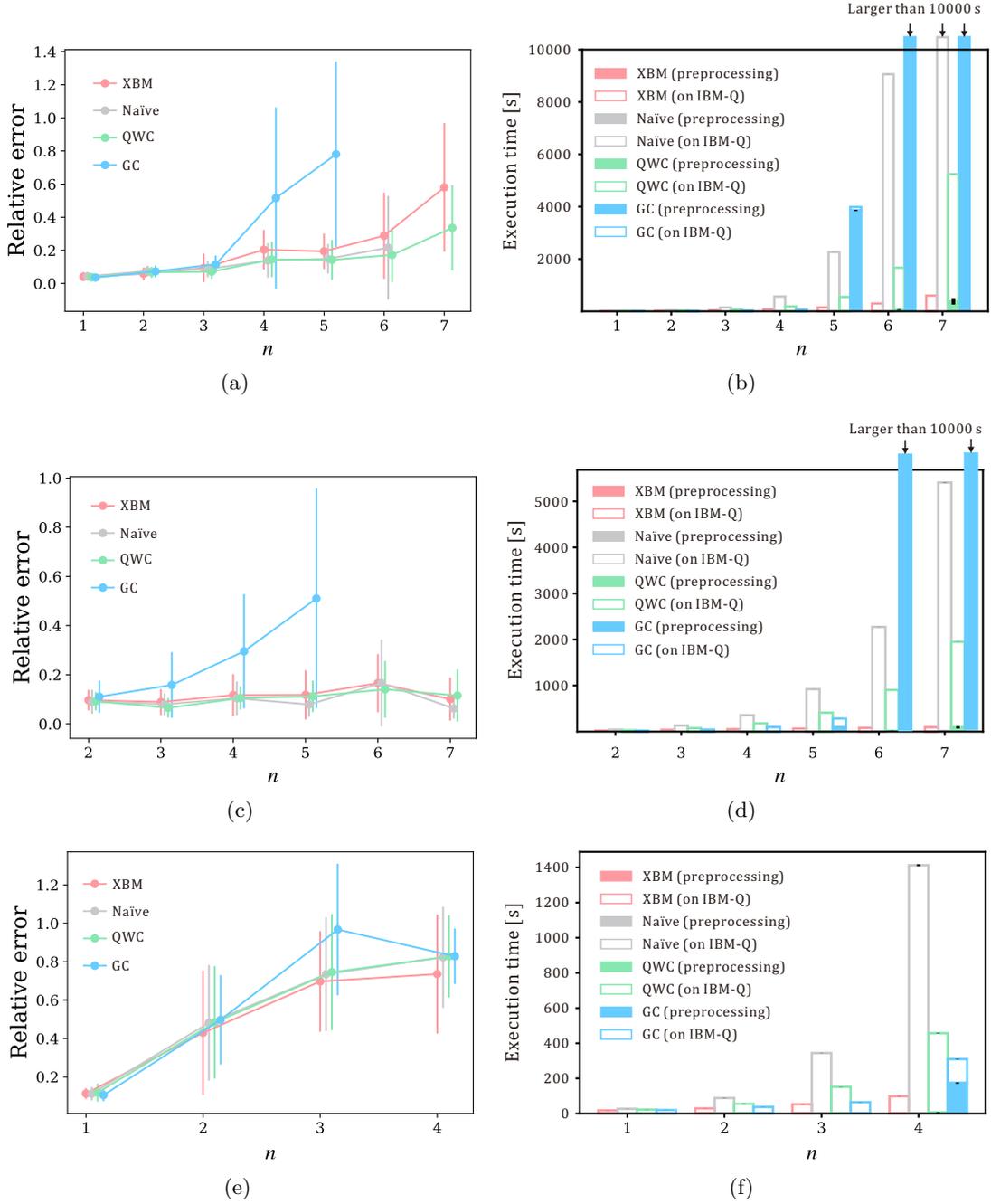

    \centering
    \subfigure[]{
        \centering
        \includegraphics[scale=1]{summary-error.pdf}
    \label{fig:real-error-psi0psi0}
    }
    \subfigure[]{
        \centering
        \includegraphics[scale=1]{summary-time.pdf}
    \label{fig:real-time-psi0psi0}
    }
    \subfigure[]{
        \centering
        \includegraphics[scale=1]{real-device-band.pdf}
    \label{fig:real-device-band-error}
    }
    \subfigure[]{
        \centering
        \includegraphics[scale=1]{real-device-band-time.pdf}
    \label{fig:real-device-band-time}
    }
    \subfigure[]{
        \centering
        \includegraphics[scale=1]{summary-error-psi0psi1.pdf}
    \label{fig:real-device-psi0psi1-error}
    }
    \subfigure[]{
        \centering
        \includegraphics[scale=1]{summary-time-psi0psi1.pdf}
    \label{fig:real-device-psi0psi1-time}
    }
    \caption{Results of the real device evaluation of $\big<\psi_0\big|A\big|\psi_1\big>$.
    (a)(b) $\big|\psi_0\big>=\big|\psi_1\big>$; (c)(d) $\big|\psi_0\big>=\big|\psi_1\big>$ and $k=3$; (e)(f) $\big|\psi_0\big>\neq\big|\psi_1\big>$. XBM: present method, Na\"ive: Pauli measurements without term grouping, QWC: Pauli measurements with grouping associated with the qubit-wise commutativity~\cite{qwc0,efficient,qwc}, GC: Pauli measurements with grouping associated with the general commutativity~\cite{gc}. {\it method} (preprocessing) and {\it method} (on IBM-Q) denote the execution time of classical preprocessing required before quantum processing and of quantum processing on the IBM-Q system, respectively.}
    \label{fig:real-psi0psi0}
\end{figure*}
\subsection{Real-device Experiments}
\subsubsection{Setups}

In real-device experiments, we use the uniform operator selecting model $p(l,s)=1/m$.
For simplicity, all measurement operators are used at once instead of sampling $l$ and $s$. We set the number of shots for each measurement operator at 8192, which is the maximum number for the IBM-Q system.
We use the \texttt{ibm\_kawasaki} backend for all real-device experiments because its CNOT error and readout error are smaller than those of other backends.
Both the elapsed wall time for preprocessing and execution time on the IBM-Q system are recorded.
Here, preprocessing includes term grouping, calculating the coefficients of Pauli bases, building measurement circuits, and all other calculations conducted on a classical computer before carrying out quantum computing.

We conduct four real device experiments.
\begin{enumerate}
\item We compare the relative errors of $\big<\psi_0\big|A\big|\psi_1\big>$ and execution time for each method when $\big|\psi_0\big>=\big|\psi_1\big>$.
\item We evaluate the relative errors of $\big<\varphi\big|A_{bc}\big|\varphi\big>$ where $A_{bc}$ has only one non-zero element in the $b$th row and $c$th column.
\item We compare the relative errors of $\big<\psi_0\big|A\big|\psi_1\big>$ and execution time for each method when $A$ is a band matrix and $\big|\psi_0\big>=\big|\psi_1\big>$.
\item We repeat the first experiment, but $\big|\psi_0\big>\neq\big|\psi_1\big>$.
\end{enumerate}
Here, we define the relative errors as
\[\displaystyle{
\frac{\Big|\big<\psi_0\big|A\big|\psi_1\big>_{\rm real\ device}-\big<\psi_0\big|A\big|\psi_1\big>_{\rm GT}\Big|}{\Big|\big<\psi_0\big|A\big|\psi_1\big>_{\rm GT}\Big|}
}\]
where $(\bullet)_{\rm real\ device}$ and $(\bullet)_{\rm GT}$ are the evaluation value of $(\bullet)$ on the real device and the ground truth of $(\bullet)$, respectively.

In the first, third and fourth experiments, all the components of $A$ are randomly set as ${\rm Re}(A_{bc}),{\rm Im}(A_{bc})\in[-100,100]$.
In preparing $\big|\psi_0\big>$ and $\big|\psi_1\big>$, $R_x$, $R_y$ and $R_z$ gates are randomly set for each qubit.
The experiments are carried out 10 times for each condition, and the mean and standard deviation of the results are calculated.

In the second experiment, only one component, $A_{bc}$, is set to be non-zero (i.e., ${\rm Re}(A_{bc}),{\rm Im}(A_{bc})\in[-100,100]$), whereas all other components are set to zero.
Unlike the case in the other experiments, $\big|\varphi\big>$ is prepared by acting on all qubits with Hadamard gates to mitigate the effect of the randomness of the $\big|\varphi\big>$ preparation.
This experiment is repeated for $b=0,1,\cdots,2^{n}-1$ and $c=0,1,\cdots,2^{n}-1$.
As a result, the relative errors for all components of $A$ are obtained.
Only XBM and Na\"ive experiments are carried out here.
The experiments are carried out 10 times for each $(b,c)$, and the mean results are calculated.
\subsubsection{Results}
\paragraph{Experiment 1}
As shown in Fig.~\ref{fig:real-error-psi0psi0}, the relative errors of the XBM method are much better than those of the GC method but worse than those of the Na\"ive and QWC methods for $n\ge 4$ when $\big|\psi_0\big>=\big|\psi_1\big>$.
This trend can be explained by the circuit complexity.
The XBM method uses multiple CNOT gates whereas the Na\"ive and QWC methods do not, and the circuit for the GC method is much more complex than the circuits of the other methods.
Although the XBM method is slightly inferior in terms of the relative error compared with the Na\"ive and QWC methods, the total execution time for the XBM method is much shorter than the times for the other methods (Fig.~\ref{fig:real-time-psi0psi0}).
The execution time on a real device for the XBM method is slightly longer than that for the GC method, which depends on the number of unique circuits.
However, the total execution time for the XBM method is much shorter than that for the GC method because the XBM method performs term grouping much more rapidly than the GC method.
In the case of the XBM method, the execution time for term grouping is negligible compared with the execution time on the IBM-Q system.
In summary, the proposed method, namely the XBM method, can evaluate $\big<\psi_0\big|A\big|\psi_0\big>$ with slightly worse accuracy than the Na\"ive and QWC methods but in much less time.

\begin{figure}[tb]
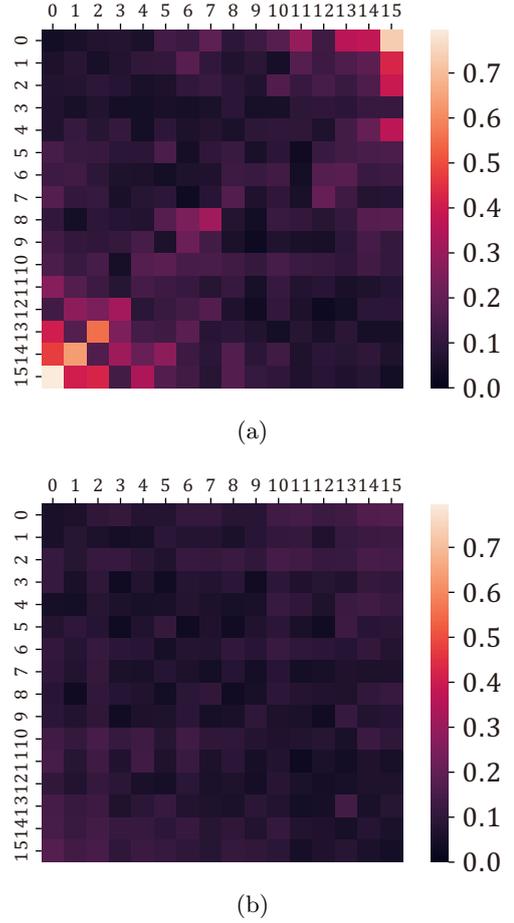

    \centering
    \subfigure[]{
        \centering
        \includegraphics[width=0.8\linewidth]{elementwise-error-xbm.pdf}
    \label{fig:elementwise-error-xbm}
    }
    \\
    \subfigure[]{
        \centering
        \includegraphics[width=0.8\linewidth]{elementwise-error-naive.pdf}
    \label{fig:elementwise-error-naive}
    }
    \caption{Element-wise relative errors of $\big<\varphi\big|A_{rc}\big|\varphi\big>$ for a four-qubit system evaluated on a real device: (a) XBM method; (b) Na\"ive method.}
    \label{fig:elementwise-error}
\end{figure}
\paragraph{Experiment 2}
Figure~\ref{fig:elementwise-error} compares the elementwise relative error between the XBM and Na\"ive methods. In the case of the XBM method, the relative errors are larger for the anti-diagonal components than for the other components, whereas all components have similar relative errors in the case of the Na\"ive method.
This result is consistent with that of the XBM method using the most CNOT gates in the evaluation of anti-diagonal components.
Note that the number of CNOT gates required for the XBM method to evaluate the $(b,c)$ component of $A$ is
\[\max\Bigg(0, \sum^{2^n-1}_{i=0} b_i \oplus c_i-1\Bigg).\]
This indicates that the XBM method can evaluate $\big<\psi_0\big|A\big|\psi_0\big>$ with much smaller error when $A$ is a band matrix.

\paragraph{Experiment 3}
Figure~\ref{fig:real-device-band-error} shows that the relative errors of the XBM, Na\"ive and QWC methods are small in the case that $A$ is a band matrix; the errors approximately range $0.1-0.15$ and they are all comparable for at least $n \le 7$.
Meanwhile, the XBM method can evaluate $\big<\psi_0\big|A\big|\psi_0\big>$ much more rapidly than the Na\"ive and QWC methods (Fig.~\ref{fig:real-device-band-time}).
It is thus concluded that the XBM method can evaluate $\big<\psi_0\big|A\big|\psi_0\big>$ with the same accuracy as the Na\"ive and QWC methods and in a much shorter time when $A$ is a band matrix.

\paragraph{Experiment 4}
Figures~\ref{fig:real-device-psi0psi1-error} and \ref{fig:real-device-psi0psi1-time} show the results for the case that $\big|\psi_0\big>\neq\big|\psi_1\big>$.
As shown in Fig.~\ref{fig:real-device-psi0psi1-error}, although the XBM method can evaluate $\big<\psi_0\big|A\big|\psi_1\big>$ much more rapidly than all the other methods (Fig.~\ref{fig:real-device-psi0psi1-time}), the relative errors are much larger than in the case that $\big|\psi_0\big>=\big|\psi_1\big>$ for all methods.
This is because in the case that $\big|\psi_0\big>\neq\big|\psi_1\big>$, $\big|\psi_0,\psi_1\big>=\big(\big|0\big>\big|\psi_0\big>+\big|1\big>\big|\psi_1\big>\big)/\sqrt{2}$ is used to evaluate $\big<\psi_0\big|A\big|\psi_1\big>$, which requires a large number of \red{two-qubit entangling} gates as shown in Fig.~\ref{fig:psi0psi1}.
As pointed out above, the evaluation error increases with the number of \red{two-qubit entangling} gates, such as CNOT gates.
Hence, to suppress the error in the case that $\big|\psi_0\big>\neq\big|\psi_1\big>$, reducing the \red{two-qubit entangling} gate error or adopting an alternative approach is required.

\section{Conclusions}
\label{sec:conclusions}
We proposed a method of evaluating $\big<\psi_0\big|A\big|\psi_1\big>$ where $A\in{\mathbb C}^{2^n\times2^n}$ and $\big|\psi_0\big>,\big|\psi_1\big>\in{\mathbb C}^{2^n}$.
Using our method of XBMs, the number of unique circuits required to estimate the expectation value can be reduced to $2(2^r+k(n-r))$ in the worst case, where $k$ is the bandwidth of $A$ and $r:=\lceil \log_2 k \rceil$.
The greatest advantage of the XBM method over existing methods is the computational time required to prepare the simultaneous measurements.
The time complexity of the XBM method including term grouping and building the measurement circuits is $O(nd)$, where $d$ is the number of non-zero elements of $A$ and the gate counts for the measurement operators are $O(n)$ for each unique circuit.
In addition, when non-zero elements of $A$ are densely filled in the bandwidth $k$, the variance of the expectation evaluation is $O((n^{c+1}/2^n){\rm tr} (A^2))$, which is smaller than that of existing methods.
The real device experiments show that although there is a trade-off between accuracy and speed, the XBM method has an appreciable advantage over existing methods in terms of the computational cost, especially when $A$ is a band matrix.

\section{Acknowledgement}
\label{sec:acknowledgement}

This work is partly supported by UTokyo Quantum Initiative.
We thank Prof. Imoto for insightful comments.


\bibliographystyle{unsrtnat} 
\bibliography{reference}

\onecolumn\newpage
\appendix

\setcounter{equation}{0}
\renewcommand{\theequation}{S.\arabic{equation}}
\setcounter{figure}{0}
\renewcommand{\thefigure}{S.\arabic{figure}}

\section{Derivation of Eq.~\eqref{eq:exp}}
\label{app:derivation}
\subsection{Deriving Measurement Operators}
Let $A'\in {\mathbb C}^{2^{n} \times 2^{n}}$ be an arbitrary complex matrix and $\big|\psi_0\big>,\big|\psi_1\big>\in{\mathbb{C}^{2^{n}}}$ be arbitrary $n$ qubits quantum states.
The expectation value $\big<\psi_0\big|A'\big|\psi_1\big>$ can be written as
\begin{align}
\label{eq:org}
    \big<\psi_0\big|A'\big|\psi_1\big>
    &=\big<\varphi\big|A\big|\varphi\big>,
\end{align}
where
\begin{align}
\label{eq:defAprime}
    A
    &:=\left\{
        \begin{array}{ll}
            A' & {\rm if}\ \ \ \big|\psi_0\big>=\big|\psi_1\big>\\[5mm]
            \begin{bmatrix} {\bf 0} & 2A' \\ {\bf 0} & {\bf 0} \end{bmatrix} & {\rm else}  
        \end{array}
    \right.
\end{align}
and
\begin{align}
    \big|\varphi\big>
    &:=\left\{
        \begin{array}{ll}
            \big|\psi_0\big>=\big|\psi_1\big> & {\rm if}\ \ \ \big|\psi_0\big>=\big|\psi_1\big>\\[5mm]
            \displaystyle{\frac{\big|0\big>\big|\psi_0\big>+\big|1\big>\big|\psi_1\big>}{\sqrt{2}}} &{\rm else}  
        \end{array}
    \right. .
\end{align}
Note that the state $\Big(\big|0\big>\big|\psi_0\big>+\big|1\big>\big|\psi_1\big>\Big)/\sqrt{2}$ can be prepared using the circuit described in Fig.~\ref{fig:psi0psi1}~\cite{poisson1}.
$A$ is now expressed as
\begin{align}
\label{eq:Aexp}
    A=\sum^{2^n-1}_{b=0} \sum^{2^n-1}_{\tilde{c}=0} A_{bc} \big|b\big>\big<c\big|
\end{align}
where
\begin{align}
\label{eq:tildec}
    c
    &:=\left\{
        \begin{array}{ll}
            \tilde{c} & {\rm if}\ \ \ \big|\psi_0\big>=\big|\psi_1\big>\\[5mm]
            \tilde{c}+2^n &{\rm else}  
        \end{array}
    \right. .
\end{align}
Using this representation, $\big<\psi_0\big|A'\big|\psi_1\big>$ is reformulated as
\begin{align}
    \big<\psi_0\big|A'\big|\psi_1\big>
    =\sum^{2^n-1}_{b=0} \sum^{2^n-1}_{\tilde{c}=0} A_{bc} \big<\varphi\big| b\big>\big<c\big|\varphi\big>.
\end{align}

\begin{figure}[tb]
    \centering
    \includegraphics[scale=1]{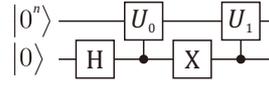}
    \caption{Circuit for preparing $\frac{1}{\sqrt{2}}\Big(\big|0\big>\big|\psi_0\big>+\big|1\big>\big|\psi_1\big>\Big)$~\cite{poisson1}, where $U_i:\big|0^n\big>\mapsto\big|\psi_i\big>$ for $i=0,1$. Note that the most significant bit is illustrated at the bottom of the circuit diagram.}
    \label{fig:psi0psi1}
\end{figure}

Unlike existing Pauli-measurement-based methods, Pauli bases are not used to represent $A$ in the method proposed in this study.
This raises the problem that obtaining the value of $\big<\varphi\big|b\big>\big<c\big|\varphi\big>$ by measurement is no longer straightforward, but there is the advantage that corresponding coefficients $A_{bc}$ can be immediately determined.
Meanwhile, in the case of Pauli-measurement-based methods, the calculation of ${\rm Tr}\big(A^\dagger{\bf P}_i\big)/2^n$ for each Pauli base ${\bf P}_i$ is required.

The purpose of the present method is to evaluate
$\big<\varphi\big|b\big>\big<c \big|\varphi\big>$ for $b,c=0,1,\cdots,2^n-1$ by measurement.
Let $\big|d_\pm^{(b,c)}\big>$ and $\big|e_\pm^{(b,c)}\big>$ be defined as
\begin{align}
    \label{def-de}
    \big|d_\pm^{(b,c)}\big>&:=\frac{1}{\sqrt{2}} \Big(\big|b\big>\pm\big|c\big>\Big)\\
    \big|e_\pm^{(b,c)}\big>&:=\frac{1}{\sqrt{2}} \Big(\big|b\big>\pm{\rm i}\big|c\big>\Big).
\end{align}
Using these definitions, the real and imaginary parts of $\big<\varphi\big|b\big>\big<c\big|\varphi\big>$ can be expressed as
\begin{align}
\label{eq:part-re}
    {\rm Re}\Big(\big<\varphi\big|b\big>\big<c\big|\varphi\big>\Big) &= \frac{1}{2} \Big(\big|D_+^{(b,c)}\big|^2-\big|D_-^{(b,c)}\big|^2\Big)\\
\label{eq:part-im}
    {\rm Im}\Big(\big<\varphi\big|b\big>\big<c\big|\varphi\big>\Big) &= \frac{1}{2} \Big(\big|E_+^{(b,c)}\big|^2-\big|E_-^{(b,c)}\big|^2\Big)
\end{align}
where
\begin{align}
\label{eq:def-D}
    D_\pm^{(b,c)} &:= \big<d_\pm^{(b,c)} \big|\varphi\big> \in {\mathbb C}\\
\label{eq:def-E}
    E_\pm^{(b,c)} &:= \big<e_\pm^{(b,c)} \big|\varphi\big> \in {\mathbb C}.
\end{align}
Equations~\eqref{eq:part-re}-\eqref{eq:def-E} suggest that the real and imaginary parts of $\big<\varphi\big|b\big>\big<c\big|\varphi\big>$ can be evaluated using the probabilities of $\big|\varphi\big>$ evaluated on the basis of $\big|d_\pm^{(b,c)}\big>$ and  $\big|e_\pm^{(b,c)}\big>$, respectively.
Such basis transformations can be achieved by applying the inverse of the preparation unitaries of $\big|d_\pm^{(b,c)}\big>$ and $\big|e_\pm^{(b,c)}\big>$ to $\big|\varphi\big>$, respectively.
Hence, in the following, we focus on how to obtain $\big|d_\pm^{(b,c)}\big>$ and $\big|e_\pm^{(b,c)}\big>$ from $\big|b\big>$ and $\big|c\big>$.

Let $\big|b\big>$ and $\big|c\big>$ be denoted by bit strings as
\begin{align}
    \big|b\big>&=\big|b_{n-1}\cdots b_1 b_0\big>, \\
    \big|c\big>&=\big|c_{n-1}\cdots c_1 c_0\big>,
\end{align}
where $b=\sum^{n-1}_{i=0}b_i 2^i$, $c=\sum^{n-1}_{i=0}c_i 2^i$, and $b_i, c_i \in \{0,1\}$.
We define two sets of indices as
\begin{align}
\label{eq:def-T}
    \mathcal{T}^{(b,c)}&:=\big\{j | b_j \ne c_j,\ j\in{0,1,\cdots,n-1}\big\},\\
\label{eq:def-T0}
    \mathcal{T}_0^{(b)}&:=\big\{j | b_j =0,\ j\in{0,1,\cdots,n-1}\big\}.
\end{align}
We then denote an important index by
\begin{align}
\label{eq:def-j}
    j^{(b,c)}_0 \in \mathcal{T}^{(b,c)} \cap \mathcal{T}^{(b)}_0.
\end{align}
Note that any element of $\mathcal{T}^{(b,c)} \cap \mathcal{T}^{(b)}_0$ can be used for $j^{(b,c)}_0$ in the following.
If $\mathcal{T}^{(b,c)}\cap\mathcal{T}_0^{(b)}=\emptyset$, $\mathcal{T}^{(c,b)}\cap\mathcal{T}_0^{(c)}$ is used instead through the relationships
\begin{align}
\label{eq:exchange-re}
    {\rm Re}\Big( \big<\varphi\big| b\big> \big< c \big|\varphi\big>\Big)
    &=  {\rm Re}\Big( \big<\varphi\big| c\big> \big< b \big|\varphi\big>\Big)\\
\label{eq:exchange-im}
     {\rm Im}\Big( \big<\varphi\big| b\big> \big< c \big|\varphi\big>\Big)
    &= - {\rm Im}\Big( \big<\varphi\big| c\big> \big< b \big|\varphi\big>\Big).
\end{align}
That is, if there exists $j$ such that $b_j\neq c_j$ but $b_j\neq0$, $b$ and $c$ are exchanged using Eqs.~\eqref{eq:exchange-re} and \eqref{eq:exchange-im}.
Applying the Hadamard gate to $\big|b\big>$ on the $j^{(b,c)}_0$th qubit yields
\begin{align}
    {\rm H}(j_0^{(b,c)})\big|b\big>
    &=\frac{1}{\sqrt{2}}
    \big|b_{n-1}\cdots b_{j_0^{(b,c)}+1}\big>\Big(\big|0\big>+\big|1\big>\Big)\big|b_{j^{(b,c)}_0-1}\cdots b_1 b_0\big>\nonumber \\
    &=\frac{1}{\sqrt{2}}\Big(
    \big|b\big>+\big|b_{n-1}\cdots b_{j^{(b,c)}_0+1}c_{j^{(b,c)}_0}b_{j^{(b,c)}_0-1}\cdots b_1 b_0\big>
    \Big),
\end{align}
where $c_{j_0^{(b,c)}}=1$ and ${\rm H}(j)$ denotes the Hadamard gate acting on the $j$th qubit.
To convert $\big|b_{n-1}\cdots b_{j_0^{(b,c)}+1}c_{j^{(b,c)}_0}b_{j^{(b,c)}_0-1}\cdots b_1 b_0\big>$ to $\big|c\big>$, multiple CNOT gates are used:
\begin{align}
    \prod_{k\in \widetilde{\mathcal{T}}^{(b,c)}}{\rm CNOT}(j^{(b,c)}_0,k) {\rm H}(j_0^{(b,c)})\big|b\big>
    = \frac{1}{\sqrt{2}}\Big(\big|b\big> + \big|c\big>\Big)
    &=\big|d^{(b,c)}_+\big>
\end{align}
where 
\begin{align}
    \widetilde{\mathcal{T}}^{(b,c)} := \mathcal{T}^{(b,c)} \setminus \{j_0^{(b,c)}\}
\end{align}
and ${\rm CNOT}(j,k)$ denotes the CNOT gate whose control and target bits are the $j$th and $k$th qubits, respectively.
When $\widetilde{\mathcal{T}}^{(b,c)}=\emptyset$, the CNOT gate is not used.
We now express $\big|d^{(b,c)}_\pm\big>$ as
\begin{align}
    \big|d_\pm^{(b,c)}\big>&=M_{\rm Re}^{(b,c)\dagger} \big|\beta_\pm^{(b,c)}\big>
\end{align}
where
\begin{align}
\label{eq:def-b}
    \big|\beta_+^{(b,c)}\big>&:=\big|b\big> \\
\label{eq:def-Xb}
    \big|\beta_-^{(b,c)}\big>&:={\rm X}(j_0^{(b,c)})\big|b\big> = \big|b \oplus 2^{j^{(b,c)}_0}\big>.
\end{align}
${\rm X}(j)$ indicates the X gate acting on the $j$th qubit, $\oplus$ is a bit-wise exclusive OR operation, and
\begin{align}
\label{eq:def-MRe}
    M^{(b,c)\dagger}_{\rm Re} :=
    \prod_{k\in \widetilde{\mathcal{T}}^{(b,c)}}{\rm CNOT}(j_0^{(b,c)},k) {\rm H}(j_0^{(b,c)}).
\end{align}
Using the measurement operator $M_{\rm Re}^{(b,c)}$, $\big|D_\pm^{(b,c)}\big|^2$ can be written as
\begin{align}
\label{eq:Deq}
    \big|D_\pm^{(b,c)}\big|^2&=\Big|\big<\beta^{(b,c)}_\pm\big |M_{\rm Re}^{(b,c)}\big|\varphi\big>\Big|^2.
\end{align}
This value is equivalent to the probabilities of outcomes $\beta^{(b,c)}_\pm$ when measuring $M^{(b,c)}_{\rm Re}\big|\varphi\big>$.

In the same sense, we define $M^{(b,c)}_{\rm Im}$ by
\begin{align}
\label{eq:def-MIm}
    M^{(b,c)\dagger}_{\rm Im} :=
    {\rm S}(j^{(b,c)}_0)\prod_{k\in \widetilde{\mathcal{T}}^{(b,c)}}{\rm CNOT}(j_0^{(b,c)},k) {\rm H}(j^{(b,c)}_0)
\end{align}
where ${\rm S}(j)$ denotes the phase gate acting on the $j$th qubit.
Then,
\begin{align}
    \big|e_\pm^{(b,c)}\big>&=M^{(b,c)\dagger}_{\rm Im} \big|\beta^{(b,c)}_\pm\big>
\end{align}
and
\begin{align}
    \big|E_\pm^{(b,c)}\big|^2&=\Big|\big<\beta^{(b,c)}_\pm\big|M^{(b,c)}_{\rm Im}\big|\varphi\big>\Big|^2.
\end{align}


\subsection{Reducing the Number of Circuits}
\label{sec:proof-simul}
To obtain the expectation value, $\big<\psi_0\big|A'\big|\psi_1\big>$, we need to measure the numerous circuits $M^{(0,0)}_{\rm Re}\big|\varphi\big>,M^{(0,1)}_{\rm Re}\big|\varphi\big>,\ldots,M^{(0,0)}_{\rm Im}\big|\varphi\big>,M^{(0,1)}_{\rm Im}\big|\varphi\big>,\ldots$, which requires large computational resources.
Fortunately, some of these circuits can be swapped with each other.
In brief, the following lemma holds.
\begin{lemma}
\label{lemma:simul}
If $b\oplus c=b'\oplus c'$, then any pair in $\big\{|d^{(b,c)}_+\rangle\langle d^{(b,c)}_+|, |d^{(b,c)}_-\rangle\langle d^{(b,c)}_-|, |d^{(b',c')}_+\rangle\langle d^{(b',c')}_+|, |d^{(b',c')}_-\rangle\langle d^{(b',c')}_-|\big\}$ is simultaneously diagonalizable.
In the same way, if $b\oplus c=b'\oplus c'$, then any pair in $\big\{|e^{(b,c)}_+\rangle\langle e^{(b,c)}_+|, |e^{(b,c)}_-\rangle\langle e^{(b,c)}_-|, |e^{(b',c')}_+\rangle\langle e^{(b',c')}_+|, |e^{(b',c')}_-\rangle\langle e^{(b',c')}_-|\big\}$ is simultaneously diagonalizable.
\end{lemma}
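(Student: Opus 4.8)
The plan is to reduce simultaneous diagonalizability to the elementary fact that two rank-one projectors $|u\rangle\langle u|$ and $|v\rangle\langle v|$ commute — equivalently, are simultaneously diagonalizable — if and only if the unit vectors $|u\rangle$ and $|v\rangle$ are either orthogonal or parallel. Granting this, it suffices to check that any two of the four vectors $|d^{(b,c)}_\pm\rangle$, $|d^{(b',c')}_\pm\rangle$ fall into one of these two categories whenever $l := b\oplus c = b'\oplus c'$. First I would dispose of the intra-pair cases: within a single pair the two vectors are orthogonal, since $\langle d^{(b,c)}_+ | d^{(b,c)}_-\rangle = \tfrac{1}{2}(\langle b| + \langle c|)(|b\rangle - |c\rangle) = 0$ for $b\neq c$, and likewise for the primed pair.

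The heart of the argument is a short combinatorial dichotomy on the computational-basis pairs $\{b,c\}$ and $\{b',c'\}$. Writing $c = b\oplus l$ and $c' = b'\oplus l$, I would show the two pairs cannot share exactly one element: if, say, $b=b'$ then $c = b\oplus l = b'\oplus l = c'$, and each of the remaining coincidences ($b=c'$, $c=b'$, $c=c'$) forces $\{b',c'\} = \{c,b\}$. Hence either $\{b,c\} = \{b',c'\}$ as sets, or the pairs are disjoint. In the equal-sets case the primed vectors equal the unprimed ones up to a global sign on $|d_-\rangle$ (namely $|d^{(c,b)}_+\rangle = |d^{(b,c)}_+\rangle$ and $|d^{(c,b)}_-\rangle = -|d^{(b,c)}_-\rangle$), so the four projectors collapse to two mutually orthogonal projectors that trivially commute. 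In the disjoint case $\{|b\rangle,|c\rangle,|b'\rangle,|c'\rangle\}$ are four orthonormal basis states, so $\operatorname{span}\{|b\rangle,|c\rangle\}\perp\operatorname{span}\{|b'\rangle,|c'\rangle\}$ and every primed vector is orthogonal to every unprimed one; all cross projectors commute. The imaginary-part statement is identical after replacing $|d^{(b,c)}_\pm\rangle$ by $|e^{(b,c)}_\pm\rangle = \tfrac{1}{\sqrt 2}(|b\rangle \pm \mathrm{i}|c\rangle)$, whose two members are again orthogonal and which span the same two-dimensional coordinate subspaces, so the same dichotomy applies verbatim.

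As a second, more constructive route — the one that also yields Theorem~\ref{theorem:simul} — I would instead exhibit a single unitary diagonalizing all four projectors at once. Because $\mathcal{T}^{(b,c)} = \{j : (b\oplus c)_j = 1\}$ and the canonical index $j_0^{(l)} = \max\{j : l_j = 1\}$ depend only on $l$, the operator $M^{(l)}_{\mathrm{Re}}$ of Table~\ref{tab:def} is a function of $l$ alone; by construction it sends each $|d^{(b,c)}_\pm\rangle$ to the computational-basis state $|\beta^{(b,c)}_\pm\rangle$ (Eq.~\eqref{eq:Deq}), so it diagonalizes all of $|d^{(b,c)}_\pm\rangle\langle d^{(b,c)}_\pm|$ and $|d^{(b',c')}_\pm\rangle\langle d^{(b',c')}_\pm|$ simultaneously. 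I expect the main obstacle to sit precisely here: the derivation selects $j_0^{(b,c)}$ from $\mathcal{T}^{(b,c)}\cap\mathcal{T}_0^{(b)}$, which a priori depends on $b$, and when this set is empty one must invoke the exchange relations \eqref{eq:exchange-re}–\eqref{eq:exchange-im} to swap $b$ and $c$. Verifying that, after these swaps, the same $j_0^{(l)}$ (hence the same $M^{(l)}_{\mathrm{Re}}$) serves both pairs is the only delicate bookkeeping step; the self-contained orthogonal-or-parallel argument above sidesteps it entirely, which is why I would lead with that and present the unitary construction as the bridge to the circuit-count reduction.
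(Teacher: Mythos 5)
Your proof is correct and takes essentially the same route as the paper's: both arguments hinge on the key combinatorial observation that $b\oplus c = b'\oplus c'$ forces the pairs $\{b,c\}$ and $\{b',c'\}$ to either coincide (possibly with $b,c$ swapped) or be disjoint — sharing exactly one element is impossible — and then deduce pairwise commutativity of the Hermitian rank-one projectors, which gives simultaneous diagonalizability. The only cosmetic difference is that the paper verifies commutativity by expanding the projectors in the matrix units $|b\rangle\langle c|$ and checking $\big[|b\rangle\langle c|,|b'\rangle\langle c'|\big]=\big[|b\rangle\langle c|,|c'\rangle\langle b'|\big]=0$ in the surviving cases, whereas you apply the orthogonal-or-parallel criterion directly to the superposition vectors, with the swapped-case phase bookkeeping (e.g.\ $|d^{(c,b)}_-\rangle=-|d^{(b,c)}_-\rangle$ and $|e^{(c,b)}_\pm\rangle=\pm\,{\rm i}\,|e^{(b,c)}_\mp\rangle$) supplying the same conclusion.
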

\begin{proof}
It is easily confirmed that $|d^{(b,c)}_\pm\rangle\langle d^{(b,c)}_\pm|$ is Hermitian for all $b$ and $c$.
We show that any pair in $\big\{|d^{(b,c)}_+\rangle\langle d^{(b,c)}_+|, |d^{(b,c)}_-\rangle\langle d^{(b,c)}_-|, |d^{(b',c')}_+\rangle\langle d^{(b',c')}_+|, |d^{(b',c')}_-\rangle\langle d^{(b',c')}_-|\big\}$ commutes.
When $b\oplus c=b'\oplus c'$, the following holds.
\begin{align*}
b=c \ \ \ &\Rightarrow  \ \ \ b'=b\oplus c\oplus c'=0\oplus c'=c' \\
b'=c'\ \ \ &\Rightarrow  \ \ \ b=b'\oplus c'\oplus c=0\oplus c=c \\
b=c' \ \ \ &\Rightarrow  \ \ \ b'=b\oplus c\oplus c'=0\oplus c=c \\
b'=c \ \ \ &\Rightarrow  \ \ \ b=b'\oplus c'\oplus c=0\oplus c'=c' 
\end{align*}
That is,
\begin{align*}
b\oplus c=b'\oplus c'
\ \ \ \Rightarrow\ \ \ 
b=c\Leftrightarrow b'={c'}\ \ \ {\rm and}\ \ \ b=c'\Leftrightarrow b'=c .
\end{align*}
From this property, we need to check if $\Big[|b\rangle \langle c|,|b'\rangle \langle c'|\Big]=0$ or not, where $[A,B]=AB-BA$ is a commutator, only in the cases of (i) $b\neq b'$ and $c\neq c'$ and (ii) $b=b'$ and $c=c'$.
In the same sense, we need to check if $\Big[|b\rangle \langle c|,|c'\rangle \langle b'|\Big]=0$ or not only in the cases of (i) $b\neq c'$ and $c\neq b'$, (ii) $b=c'$ and $c=b'$.
These checks are easily carried out and the result is that
\begin{align*}
b\oplus c=b'\oplus c'
\ \ \ \Rightarrow\ \ \ 
\Big[|b\rangle \langle c|,|b'\rangle \langle c'|\Big]
=\Big[|b\rangle \langle c|,|c'\rangle \langle b'|\Big]
=0.
\end{align*}
As a consequence, because $|d^{(b,c)}_\pm\rangle=(|b\rangle\pm|c\rangle)/\sqrt{2}$,
\begin{align*}
&b\oplus c=b'\oplus c'
\ \ \ \Rightarrow\ \ \ 
[A,B]=0 \nonumber \\
&\ \ \ {\rm where}\ \ \ 
A,B \in \Big\{|d^{(b,c)}_+\rangle\langle d^{(b,c)}_+|, |d^{(b,c)}_-\rangle\langle d^{(b,c)}_-|, |d^{(b',c')}_+\rangle\langle d^{(b',c')}_+|, |d^{(b',c')}_-\rangle\langle d^{(b',c')}_-|\Big\}
\end{align*}
Any pair in $\big\{|d^{(b,c)}_+\rangle\langle d^{(b,c)}_+|, |d^{(b,c)}_-\rangle\langle d^{(b,c)}_-|, |d^{(b',c')}_+\rangle\langle d^{(b',c')}_+|, |d^{(b',c')}_-\rangle\langle d^{(b',c')}_-|\big\}$ is Hermitian and commutes, and is therefore simultaneously diagonalizable.
In the same way, it can be shown that $\big\{|e^{(b,c)}_+\rangle\langle e^{(b,c)}_+|, |e^{(b,c)}_-\rangle\langle e^{(b,c)}_-|, |e^{(b',c')}_+\rangle\langle e^{(b',c')}_+|, |e^{(b',c')}_-\rangle\langle e^{(b',c')}_-|\big\}$ are simultaneously diagonalizable.
\qed
\end{proof}

Lemma~\ref{lemma:simul} states that there exists an unitary that simultaneously diagonalizes $|d^{(b,c)}_\pm\rangle\langle d^{(b,c)}_\pm|$ and $|d^{(b',c')}_\pm\rangle\langle d^{(b',c')}_\pm|$ when $b \oplus c = b' \oplus c'$.
This could appreciably reduce the number of unique circuits required to estimate the expectation value.
However, lemma~\ref{lemma:simul} does not state that which unitary can do.
In the following, we prove that $M^{(b,c)}_{\rm Re}$ can simultaneously diagonalize $|d^{(b,c)}_\pm\rangle\langle d^{(b,c)}_\pm|$ and $|d^{(b',c')}_\pm\rangle\langle d^{(b',c')}_\pm|$ when $b \oplus c = b' \oplus c'$.
In the same sense, we can show that $M^{(b,c)}_{\rm Im}$ can simultaneously diagonalize $|e^{(b,c)}_\pm\rangle\langle e^{(b,c)}_\pm|$ and $|e^{(b',c')}_\pm\rangle\langle e^{(b',c')}_\pm|$ when $b \oplus c = b' \oplus c'$.

Before proving the general case, let us consider an example.
Consider the example of Eq.~\eqref{eq:part-re}; i.e., 
\[\displaystyle{
{\rm Re}\Big(\big<\varphi\big|011\big>\big<101\big|\varphi\big>\Big)
=\frac{1}{2}\Big(\big|D^{(011,101)}_+\big|^2-\big|D^{(011,101)}_-\big|^2\Big).
}\]
In this case, because $b=011$ and $c=101$,
\[\displaystyle{
\mathcal{T}^{(b,c)}\cap\mathcal{T}^{(b)}_0=\{1,2\}\cap\{2\}=\{2\}
}\]
yielding
\[
j^{(b,c)}_0=2
\]
as in Eq.~\eqref{eq:def-j}.
The bit strings used are 
\[
\beta^{(b,c)}_+=b=011
\ \ \ {\rm and}\ \ \ 
\beta^{(b,c)}_-={\rm X}(2)b=111.
\]
Hence, from Eq.~\eqref{eq:Deq},
\[\displaystyle{
\big|D^{(011,101)}_+\big|^2=\Big|\big<011\big|M^{(011,101)}_{\rm Re}\big|\varphi\big>\Big|^2
\ \ \ {\rm and}\ \ \ 
\big|D^{(011,101)}_-\big|^2=\Big|\big<111\big|M^{(011,101)}_{\rm Re}\big|\varphi\big>\Big|^2,
}\]
where
\[\displaystyle{
M^{(011,101)}_{\rm Re}={\rm H}(2){\rm CNOT}(2,1).
}\]
We easily find that these values are equivalent to $\big|D^{(101,011)}_\pm\big|^2=\Big|\big<\beta^{(101,011)}_\pm\big| M^{(101,011)}_{\rm Re}\big|\varphi\big>\Big|^2$, in which $M^{(101,011)}_{\rm Re}={\rm H}(1){\rm CNOT}(1,2)$.
This means that $M^{(011,101)}_{\rm Re}$ is no longer required to evaluate ${\rm Re}\Big(\big<\varphi\big|011\big>\big<101\big|\varphi\big>\Big)$ because $M^{(101,011)}_{\rm Re}$ can be used instead.

Here, the above example is generalized.
\settheoremtag{\ref{theorem:simul}}
\begin{theorem}
{\rm (reprint)}
If $b\oplus c = b' \oplus c'$ and $b'\neq c'$, then $M^{(b,c)}_{\rm Re}=M^{(b',c')}_{\rm Re}$ and $M^{(b,c)}_{\rm Im}=M^{(b',c')}_{\rm Im}$.
\end{theorem}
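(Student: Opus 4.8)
The plan is to reduce the equality of the operators to the equality of the two pieces of data that define them. By their definitions \eqref{eq:def-MRe} and \eqref{eq:def-MIm} (equivalently, the circuit expressions in Table~\ref{tab:def}), both $M^{(b,c)}_{\rm Re}$ and $M^{(b,c)}_{\rm Im}$ are completely determined by the single index $j_0^{(b,c)}$ and the target set $\widetilde{\mathcal{T}}^{(b,c)}$. Hence it suffices to show that each of these depends on the pair $(b,c)$ only through $l:=b\oplus c$; the hypothesis $b\oplus c=b'\oplus c'$ then forces $j_0^{(b,c)}=j_0^{(b',c')}$ and $\widetilde{\mathcal{T}}^{(b,c)}=\widetilde{\mathcal{T}}^{(b',c')}$, and substituting these identical data into the defining products yields the claim at once.

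First I would record the elementary bit-level identity $b_j\neq c_j \iff (b\oplus c)_j=1$. It recasts the set \eqref{eq:def-T} as $\mathcal{T}^{(b,c)}=\{j\mid l_j=1\}$, which visibly depends on $(b,c)$ only through $l$. Therefore $b\oplus c=b'\oplus c'=l$ gives $\mathcal{T}^{(b,c)}=\mathcal{T}^{(b',c')}$, and the hypothesis $b'\neq c'$ ensures $l\neq 0^n$ so that this common set is nonempty.

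The step requiring care is the choice of $j_0^{(b,c)}$, since \eqref{eq:def-j} only demands $j_0^{(b,c)}\in\mathcal{T}^{(b,c)}\cap\mathcal{T}_0^{(b)}$, a set that depends on $b$ individually and may contain several indices. I would fix the canonical choice of Table~\ref{tab:def}, $j_0^{(b,c)}:=\max\{j\mid l_j=1\}$, which depends on $l$ alone, and then verify that it is admissible. Because $l_{j_0}=1$, exactly one of $b_{j_0},c_{j_0}$ is $0$: if $b_{j_0}=0$ the orientation $(b,c)$ already satisfies the constraint $j_0\in\mathcal{T}_0^{(b)}$, whereas if $b_{j_0}=1$ I would exchange $b\leftrightarrow c$ via \eqref{eq:exchange-re}--\eqref{eq:exchange-im}, after which the same maximal index is admissible. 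Since $\mathcal{T}^{(c,b)}=\mathcal{T}^{(b,c)}$ and the operator depends only on $j_0$ and $\widetilde{\mathcal{T}}$, this exchange leaves the operator unchanged, so with the canonical choice $j_0^{(b,c)}$ — and hence $\widetilde{\mathcal{T}}^{(b,c)}=\mathcal{T}^{(b,c)}\setminus\{j_0^{(b,c)}\}$ — is a function of $l$ only. The proof then closes by direct substitution into \eqref{eq:def-MRe} and \eqref{eq:def-MIm}. I expect this admissibility check, and making precise that the $b\leftrightarrow c$ exchange preserves the operator, to be the only nonroutine point; the remainder is the bit identity and a term-by-term comparison of two Hadamard--CNOT products.
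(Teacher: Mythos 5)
Your proof is correct and takes essentially the same route as the paper's: both arguments reduce the claim to the facts that $\mathcal{T}^{(b,c)}=\{j\mid (b\oplus c)_j=1\}$ depends on $(b,c)$ only through $l=b\oplus c$, that the canonical choice $j_0=\max\{j\mid l_j=1\}$ is admissible once the pair is oriented so that $b_{j_0}=0$ (the paper phrases this orientation as restricting to $b<c$, which is equivalent), and that $M^{(b,c)}_{\rm Re}$ and $M^{(b,c)}_{\rm Im}$ are determined by $(j_0^{(b,c)},\widetilde{\mathcal{T}}^{(b,c)})$ alone, with the opposite orientation handled via Eqs.~\eqref{eq:exchange-re}--\eqref{eq:exchange-im}. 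No gaps to report.
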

\begin{proof}
First of all, Eq.~\eqref{eq:exchange-re} and \eqref{eq:exchange-im} guarantee that $\big<\varphi\big|b\big>\big<c\big|\varphi\big>$ can be evaluated with $\big<\varphi\big|c\big>\big<b\big|\varphi\big>$ instead.
We can therefore restrict the following discussion to the case that $b<c$.
The index $j^{(b,c)}_0$ is then redefined as
\begin{align}
\label{eq:def-max-j}
    j^{(b,c)}_0 = \mathop{{\rm max}}_{j} \mathcal{T}^{(b,c)} \cap \mathcal{T}^{(b)}_0,
\end{align}
which satisfies Eq.~\eqref{eq:def-j}.
We also consider pairs of all $(b,c)$ and $(b',c')$ that satisfy $b\oplus c=b'\oplus c'$.
These conditions give the equality
\begin{align}
\label{eq:property-M-1}
    \forall s\in\{{\rm Re},{\rm Im}\},\ \ \ 
    M^{(b',c')}_s = M^{(b,c)}_s \ \ \ {\rm for\ \ \ }b\oplus c=b'\oplus c' \ {\rm and }\ b'<c'.
\end{align}
Equation~\eqref{eq:property-M-1} is proved as follows.
The bit-wise exclusive OR operation, $b\oplus c$, returns bit strings that are composed of values of $1$ when $b_j\neq c_j$ and $0$ otherwise; e.g., $011 \oplus 101 = 110$.
Therefore, the relationship $b\oplus c=b'\oplus c'$ yields $\mathcal{T}^{(b,c)}=\mathcal{T}^{(b',c')}$, given Eq.~\eqref{eq:def-T}, and obviously
\begin{align}
    j^* = \mathop{{\rm max}}_{j} \mathcal{T}^{(b,c)} = \mathop{{\rm max}}_{j} \mathcal{T}^{(b',c')}.
\end{align}
Furthermore, the conditions $b<c$ and $b'<c'$ provide $(b_{j^*},c_{j^*})=(b'_{j^*},c'_{j^*})=(0,1)$, and thus $j^*\in\mathcal{T}^{(b)}_0$ and $j^*\in\mathcal{T}^{(b')}_0$.
Consequently, we obtain $j^{(b,c)}_0=j^{(b',c')}_0$.
The measurement operators $M^{(b,c)}_s$ depend only on $j^{(b,c)}_0$ and $\mathcal{T}^{(b,c)}$ (see Eqs.~\eqref{eq:def-MRe} and \eqref{eq:def-MIm}), resulting in Eq.~\eqref{eq:property-M-1}.
In the case that $b'>c'$, as discussed above, the symmetry with respect to the exchange of $b$ and $c$ provides
\begin{align}
\label{eq:property-M-2}
    \forall s\in\{{\rm Re},{\rm Im}\},\ \ \ 
    M^{(b',c')}_s = M^{(c,b)}_s \ \ \ {\rm for\ \ \ }b\oplus c=b'\oplus c' \ {\rm and }\ b'>c'.
\end{align}
\qed
\end{proof}

In short, by combining the properties of Eqs.~\eqref{eq:exchange-re} and \eqref{eq:exchange-im} and Eqs.\eqref{eq:property-M-1} and \eqref{eq:property-M-2}, all $\big<\varphi\big|b'\big>\big<c'\big|\varphi\big>$ that satisfy $b'\oplus c'=b\oplus c$ can be evaluated using the same measurement operator $M^{(b,c)}_s$.

\subsection{Closed-form Expression of $\langle \varphi | A | \varphi \rangle$}
Combining the equations derived in the previous subsections, $\langle \varphi | A | \varphi \rangle$ can be written as
\begin{align}
    \label{eq:close1}
    \langle \varphi | A | \varphi \rangle
    &=\sum_b A_{bb} \langle \varphi | b \rangle \langle b | \varphi \rangle \nonumber \\
    &\hspace{10mm}+\sum_b \sum_{c>b} A_{bc} {\rm Re} \big( \langle \varphi | b \rangle \langle c | \varphi \rangle \big)
    +\sum_b \sum_{c<b} A_{bc} {\rm Re} \big( \langle \varphi | b \rangle \langle c | \varphi \rangle \big) \nonumber \\
    &\hspace{10mm}+\sum_b \sum_{c>b} {\rm i}A_{bc} {\rm Im} \big( \langle \varphi | b \rangle \langle c | \varphi \rangle \big)
    +\sum_b \sum_{c<b} {\rm i}A_{bc} {\rm Im} \big( \langle \varphi | b \rangle \langle c | \varphi \rangle \big)\nonumber \\
    &=\sum_b A_{bb} \langle \varphi | b \rangle \langle b | \varphi \rangle \nonumber \\
    &\hspace{10mm}+\sum_b \sum_{c>b} \Bigg[ \big(A_{bc}+A_{cb}\big) {\rm Re} \big( \langle \varphi | b \rangle \langle c | \varphi \rangle \big)+ {\rm i}\big(A_{bc}-A_{cb}\big) {\rm Im} \big( \langle \varphi | b \rangle \langle c | \varphi \rangle \big) \Bigg]\nonumber \\
    &=\sum_b A_{bb} \langle \varphi | b \rangle \langle b | \varphi \rangle
    +\sum_b \sum_{c>b} \Bigg[ 2 A_{(bc)} {\rm Re} \big( \langle \varphi | b \rangle \langle c | \varphi \rangle \big)+ 2{\rm i} A_{[bc]} {\rm Im} \big( \langle \varphi | b \rangle \langle c | \varphi \rangle \big)\Bigg] \nonumber \\
    &= \sum_b A_{bb} | \langle b | \varphi \rangle |^2 \nonumber \\
    &{\hspace{10mm}}+\sum_b \sum_{c>b}A_{(bc)} \Big( |\langle b | M^{(b,c)}_{\rm Re} | \varphi \rangle|^2 - |\langle b \oplus 2^{j^{(b,c)}_0}| M^{(b,c)}_{\rm Re} | \varphi \rangle|^2\Big)\nonumber\\
    &{\hspace{10mm}}+\sum_b \sum_{c>b}{\rm i}A_{[bc]} \Big( |\langle b | M^{(b,c)}_{\rm Im} | \varphi \rangle|^2 - |\langle b \oplus 2^{j^{(b,c)}_0}| M^{(b,c)}_{\rm Im} | \varphi \rangle|^2\Big)
\end{align}
Here, we introduce
\begin{align}
    l := b \oplus c.
\end{align}
Using this notation, $j^{(b,c)}_0$ can be rewritten as
\begin{align}
    j^{(b,c)}_0
    &=\max\{j|b_j=0,c_j=1\}\nonumber\\
    &=\max\{j|b_j=0,(l\oplus b)_j=1\}\nonumber\\
    &=\left\{
        \begin{array}{ll}
            \max\{j|l_j=1\} & {\rm for}\ \ \ c>b \\
            \max\{j|b_j=0,(l\oplus b)_j=1\} & {\rm for}\ \ \ c<b
        \end{array}
    \right..
\end{align}
Note that $c=l\oplus b$.
Because $c>b$ throughout Eq.~\eqref{eq:close1}, $j^{(b,c)}_0$ can be rewritten as
\begin{align}
    j^{(l)}_0 := \max\{j|l_j=1\}.
\end{align}
In addition, because $M^{(b',c')}_{s}=M^{(b,c)}_{s}$ when $b'\oplus c'=b\oplus c=l$, where $s\in\{{\rm Re},{\rm Im}\}$, $M^{(b,c)}_s$ can be rewritten as
\begin{align}
    M^{(l)}_{s} := M^{(b,c)}_{s}.
\end{align}
Moreover, the following lemma holds.
\begin{lemma}
\label{lemma}
Let $j^{(l)}_0:=\max\{j|l_j=1\}$ and $\bar{b}:=b\oplus 2^{j^{(l)}_0}$.
Then, $\bar{b}\oplus l<\bar{b}$ if $b\oplus l > b$.
\end{lemma}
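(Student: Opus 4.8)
The plan is to reduce the whole statement to a comparison of binary strings at a single distinguished position, namely $j^{(l)}_0$. The guiding principle is the standard fact that the order of two integers written in binary is decided by the most significant bit position at which they differ: whichever has a $1$ there is the larger. Everything below is bookkeeping of the bits around position $j^{(l)}_0$.

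First I would record the defining property of $j^{(l)}_0 := \max\{j | l_j = 1\}$: namely $l_{j^{(l)}_0} = 1$ while $l_j = 0$ for every $j > j^{(l)}_0$. Since XORing with a $0$ bit leaves a bit unchanged and XORing with a $1$ bit flips it, this means that for any string $x$, the string $x \oplus l$ agrees with $x$ in all positions $j > j^{(l)}_0$ and differs from $x$ at position $j^{(l)}_0$. Applying this with $x = b$, the integers $b$ and $b \oplus l$ agree above $j^{(l)}_0$ and differ exactly at $j^{(l)}_0$, so their order is decided there. The hypothesis $b \oplus l > b$ then forces $(b \oplus l)_{j^{(l)}_0} = 1$ and $b_{j^{(l)}_0} = 0$; this deduction, that the hypothesis pins down $b_{j^{(l)}_0} = 0$, is the one substantive step.

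Finally I would run the identical comparison with $\bar{b} := b \oplus 2^{j^{(l)}_0}$ in place of $b$. Flipping bit $j^{(l)}_0$ of $b$ yields $\bar{b}_{j^{(l)}_0} = 1$ and leaves every other bit of $b$ unchanged. By the same reasoning, $\bar{b}$ and $\bar{b} \oplus l$ agree above $j^{(l)}_0$ and differ at $j^{(l)}_0$, where $(\bar{b} \oplus l)_{j^{(l)}_0} = \bar{b}_{j^{(l)}_0} \oplus l_{j^{(l)}_0} = 1 \oplus 1 = 0$. Since at the deciding position $\bar{b}$ carries a $1$ and $\bar{b} \oplus l$ carries a $0$, we conclude $\bar{b} \oplus l < \bar{b}$, as claimed.

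I do not anticipate a genuine obstacle: the only point needing care is the justification that the most significant differing bit governs the integer ordering, which is elementary. The symmetry between the two comparisons, one for $b$ and one for $\bar{b}$, makes the second half of the argument immediate once the first has fixed $b_{j^{(l)}_0} = 0$, so the proof is short and essentially self-checking.
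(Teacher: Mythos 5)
Your proof is correct and follows essentially the same route as the paper's: both arguments pin down $b_{j^{(l)}_0}=0$ from the hypothesis (the paper via the contrapositive, you directly), note that $b$, $b\oplus l$, $\bar{b}$, and $\bar{b}\oplus l$ all agree above position $j^{(l)}_0$ since $l$ vanishes there, and conclude by comparing the two strings at that single deciding bit. No gaps; the bookkeeping matches the paper's proof step for step.
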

\begin{proof}
If $b_{j^{(l)}_0}=1$, $b\oplus l<b$ because the operation $b\oplus l$ changes $b_{j^{(l)}_0}$ from $1$ to $0$, where $j^{(l)}_0$ is the most significant bit that changes value via the operation $b\oplus l$.
Therefore, $b_{j^{(l)}_0}=0$.
This results in $(b\oplus l)_{j^{(l)}_0}=1$, $(b\oplus l \oplus 2^{j^{(l)}_0})_{j^{(l)}_0}=0$, and $(b\oplus 2^{j^{(l)}_0})_{j^{(l)}_0}=1$.
In addition, $(b\oplus l)_{>j^{(l)}_0}=b_{>j^{(l)}_0}$ because $l_{>j^{(l)}_0}=0$ and $x\oplus 0=x$.
This means that the magnitude relationships between $b\oplus l$ and $b$ is determined by the value of the $j^{(l)}_0$th bit, which is the same for the relationship between $b\oplus l\oplus 2^{j^{(l)}_0}$ and $b\oplus 2^{j^{(l)}_0}$.
Therefore, $b\oplus l\oplus 2^{j^{(l)}_0}<b\oplus 2^{j^{(l)}_0}$.
\qed
\end{proof}
According to Lemma~\ref{lemma},
\begin{align}
\{l|b\oplus l>b\} = \{l|\bar{b}\oplus l<\bar{b}\}
\ \ \ {\rm where}\ \ \ 
\bar{b}:=b\oplus 2^{j^{(l)}_0},
\end{align}
and obviously,
\begin{align}
\{b\}=\{0,1,\cdots,2^n-1\}=\{\bar{b}\}.
\end{align}
Using the above notations and relationships, we have
\begin{align}
    \label{eq:reRe}
    \sum_b \sum_{c>b}A_{(bc)} |\langle b \oplus 2^{j^{(b,c)}_0}| M^{(b,c)}_{\rm Re} | \varphi \rangle|^2
    &=\sum_b \sum_{l:b\oplus l>b}A_{(b\,b\oplus l)}|\langle b \oplus 2^{j^{(l)}_0}| M^{(l)}_{\rm Re} | \varphi \rangle|^2 \nonumber \\
    &=\sum_{{b}} \sum_{l:{b}\oplus l<{b}} A_{({b}\oplus 2^{j^{(l)}_0}\,{b}\oplus l\oplus 2^{j^{(l)}_0})}|\langle {b}| M^{(l)}_{\rm Re} | \varphi \rangle|^2
\end{align}
and
\begin{align}
    \label{eq:reIm}
    \sum_b \sum_{c>b}A_{[bc]} |\langle b \oplus 2^{j^{(b,c)}_0}| M^{(b,c)}_{\rm Im} | \varphi \rangle|^2
    &=\sum_b \sum_{l:b\oplus l>b}A_{[b\,b\oplus l]}|\langle b \oplus 2^{j^{(l)}_0}| M^{(l)}_{\rm Im} | \varphi \rangle|^2 \nonumber \\
    &=\sum_{{b}} \sum_{l:{b}\oplus l<{b}} A_{[{b}\oplus 2^{j^{(l)}_0}\,{b}\oplus l\oplus 2^{j^{(l)}_0}]}|\langle {b}| M^{(l)}_{\rm Im} | \varphi \rangle|^2.
\end{align}
Note that all $\bar{b}$ and $b$ are exchanged.
Substituting Eqs.~\eqref{eq:reRe} and \eqref{eq:reIm} into \eqref{eq:close1}, we obtain
\begin{align}
\label{eq:exp-close}
    \langle \varphi | A | \varphi \rangle
    &=\sum_b A_{bb} \langle \varphi | b \rangle \langle b | \varphi \rangle \nonumber \\
    &\hspace{10mm}+ \sum_b \sum_{l\in\mathcal{S}(A)\red{\setminus\{0^n\}}}\Bigg( (-1)^w A_{(b'c')} |\langle b | M^{(l)}_{\rm Re} | \varphi \rangle|^2
    + {\rm i}(-1)^w A_{[b'c']} |\langle b | M^{(l)}_{\rm Im} | \varphi \rangle|^2\Bigg),
\end{align}
where
\red{
\begin{align}
    \mathcal{S}(A)
    &:=\{l|b\oplus l>b\} \cup \{l|b\oplus l<b\} \cup \{0^n\}\nonumber \\
    &=\{b \oplus c\}
\end{align}
}
and
\begin{align}
    (b',c',w)
    &=\left\{
        \begin{array}{ll}
            (b,b\oplus l,0) & {\rm for}\ \ \ b<b\oplus l \\
            (b\oplus 2^{j^{(l)}_0}, b\oplus l\oplus 2^{j^{(l)}_0}, 1) & {\rm for}\ \ \ b>b\oplus l
        \end{array}
    \right..
\end{align}

\subsection{Operator Selecting Model}
At first glance, the right-hand side of Eq.~\eqref{eq:exp-close} does not seem to express the expectation value because it does not take the form $\sum_i a_i P(\hat{a}=a_i)$.
Here, we further reformulate Eq.~\eqref{eq:exp-close}.
During the reformulation, we introduce an operator selecting model that is useful in evaluating the variance of Eq.~\eqref{eq:exp-close}.

Consider that the measurement operator $M^{(l)}_s$ is selected with probability $p(l,s)$ where \red{$l\in\mathcal{S}(A)$}, $s\in\{{\rm Re}, {\rm Im}\}$, $M^{(0^n)}_s=I$ and $p(l=0^n)=p(l=0^n,s)$.
The values of
\begin{align}
p(l,s,b)
    :=\left\{
        \begin{array}{ll}
            p(l,s)|\langle b | \state \rangle |^2 & l=0^n \\
            p(l,s)|\langle b | M^{(l)}_{\rm Re}| \state \rangle |^2 & l\neq 0^n,\ s={\rm Re}  \\
            p(l,s)|\langle b | M^{(l)}_{\rm Im}| \state \rangle |^2 & l\neq 0^n,\ s={\rm Im}
        \end{array}
    \right.
\end{align}
are probabilities because $p(l,s,b)\in[0,1]$ and $\sum_{l,s,b} p(l,s,b)=1$.
Note that in this representation,
\begin{align}
p(b|l,s)
    :=\left\{
        \begin{array}{ll}
            |\langle b | \state \rangle |^2 & l=0^n \\
            |\langle b | M^{(l)}_{\rm Re}| \state \rangle |^2 & l\neq 0^n,\ s={\rm Re}  \\
            |\langle b | M^{(l)}_{\rm Im}| \state \rangle |^2 & l\neq 0^n,\ s={\rm Im}
        \end{array}
    \right. .
\end{align}
Meanwhile, the coefficients
\begin{align}
\label{eq:coeff2}
a(l,s,b):=\left\{
        \begin{array}{ll}
            \displaystyle{ \frac{A_{bb}}{p(l,s)} } & l=0^n \\[5mm]
            \displaystyle{ \frac{(-1)^w A_{(b'c')}}{p(l,s)} } & l\neq 0^n,\ s={\rm Re}  \\[5mm]
            \displaystyle{ \frac{{\rm i}(-1)^w A_{[b'c']}}{p(l,s)} } & l\neq 0^n, \ s={\rm Im}
        \end{array}
    \right. .
\end{align}
can be seen as outcomes.
With these definitions, Eq.~\eqref{eq:exp} can be rewritten as
\begin{align}
\langle \state |A|\state \rangle
= \sum_{l,s,b} a(l,s,b) \, p(l,s,b),
\end{align}
which is mathematically in the form of an expectation value.

\section{Derivation of Variance}
\label{app:variance-derivation}
According to Eq.~\eqref{eq:exp}, the upper bound of the variance associated with the XBM can be written as
\begin{align}
{\rm Var}[\hat{a}]
&\le \sum^{2^n-1}_{b=0} \Bigg|\frac{A_{bb}}{p(l=0^n)} \Bigg|^2 p(l=0^n,b)\nonumber\\
&+ \sum_{l\in \mathcal{S}(A)\red{\setminus\{0^n\}}}\sum^{2^{n}-1}_{b=0} \Bigg( \Bigg| \frac{(-1)^w A_{(b'c')}}{p(l,s={\rm Re})} \Bigg|^2 p(l,s={\rm Re},b) + \Bigg|\frac{{\rm i}(-1)^w A_{[b'c']}}{p(l,s={\rm Im})}\Bigg|^2 p(l,s={\rm Im},b)\Bigg).
\end{align}
When we assume 
\begin{align}
p(l,s)=p_{\rm uniform}(l,s)=\frac{1}{m}
\end{align}
where
\red{
\begin{align}
    m:=|\mathcal{S}(A) \times \{{\rm Re}, {\rm Im}\}|
\end{align}
}
is the number of distinct measurement operators appearing in Eq.~\eqref{eq:exp}, we obtain
\begin{align}
{\rm Var}_{p_{\rm uniform}}[\hat{a}]
&\le m^2 \Bigg(\sum^{2^n-1}_{b=0} |A_{bb}|^2 p(l=0^n,b)\nonumber\\
&\hspace{10mm}+ \sum_{l\in \mathcal{S}(A)\red{\setminus\{0^n\}}}\sum^{2^{n}-1}_{b=0} \Big(|A_{(b'c')}|^2 p(l,s={\rm Re},b) + |A_{[b'c']}|^2 p(l,s={\rm Im},b)\Big)\Bigg) \nonumber \\
&\le m^2 \max_{b,c}\{|A_{bb}|^2,|A_{(bc)}|^2,|A_{[bc]}|^2\}\nonumber\\
&= m^2 \max_{b,c}\{|A_{(bc)}|^2,|A_{[bc]}|^2\}.
\end{align}
Meanwhile, when we assume
\begin{align}
p(l,s)=p_{\rm weighted}(l,s):=\left\{
        \begin{array}{ll}
            \displaystyle{ \frac{\max_b |A_{bb}|}{Z} } & l=0^n \\[5mm]
            \displaystyle{ \frac{\max_b |A_{(b'c')}|}{Z} } & l\neq 0^n,\ s={\rm Re}  \\[5mm]
            \displaystyle{ \frac{\max_b |A_{[b'c']}|}{Z} } & l\neq 0^n, \ s={\rm Im}
        \end{array}
    \right. ,
\end{align}
where
\begin{align}
\label{eq:defZ}
    Z := \max_b |A_{bb}|+ \sum_{l \in \mathcal{S}(A)\red{\setminus\{0^n\}}} \Big( \max_b |A_{(b',c')}| + \max_b |A_{[b',c']}|\Big),
\end{align}
we obtain
\begin{align}
{\rm Var}_{p_{\rm weighted}}[\hat{a}]
&\le |Z|^2 \Bigg[\sum^{2^n-1}_{b=0} \Bigg|\frac{A_{bb}}{\max_{\tilde{b}} A_{\tilde{b}\tilde{b}}} \Bigg|^2 p(l=0^n,b)\nonumber\\
&\hspace{10mm}+ \sum_{l\in \mathcal{S}(A)\red{\setminus\{0^n\}}}\sum^{2^{n}-1}_{b=0} \Bigg( \Bigg| \frac{A_{(b'c')}}{\max_{\tilde{b}'}A_{(\tilde{b}'\tilde{c}')}} \Bigg|^2 p(l,s={\rm Re},b) + \Bigg|\frac{A_{[b'c']}}{\max_{\tilde{b}'}A_{[\tilde{b}'\tilde{c}']}}\Bigg|^2 p(l,s={\rm Im},b)\Bigg)\Bigg]\nonumber\\
&\le |Z|^2 \max\Bigg\{
\Bigg(\frac{A_{bb}}{\max_{\tilde{b}} A_{\tilde{b}\tilde{b}}} \Bigg)^2,\ 
\Bigg(\frac{A_{(bc)}}{\max_{\tilde{b},\tilde{c}}A_{(\tilde{b}\tilde{c})}} \Bigg)^2,\ 
\Bigg(\frac{A_{[bc]}}{\max_{\tilde{b},\tilde{c}}A_{[\tilde{b}\tilde{c}]}} \Bigg)^2
\Bigg\} \nonumber \\
&\le |Z|^2.
\end{align}
Obviously, ${\rm Var}_{p_{\rm weighted}}[\hat{a}] \le {\rm Var}_{p_{\rm uniform}}[\hat{a}]$.

\section{Variance Comparison}
\label{app:variance-comparison}
\subsection{Pauli measurements}
In \cite{xu2021variational}, $A$ is expressed as $A=\sum_{i,j}A_{ij}|i\rangle \langle j|$, where $|i\rangle \langle j|$ is expressed using the Pauli basis as
\begin{align}
    |i \rangle \langle j|=\sum_k \frac{s^{(i,j)}_k\sigma^{(i,j)}_k}{2^n},
    \ \ \ 
    s^{(i,j)}_k \in \{\pm1,\pm{\rm i}\},
    \ \ \ 
    \sigma^{(i,j)}_k \in \{I,X,Y,Z\}^{\otimes n},
\end{align}
where $I$, $X$, $Y$, and $Z$ are the Pauli matrices.
For example,
\[\displaystyle{
|010\rangle\langle110|
=|0\rangle\langle1|\otimes |1\rangle\langle1|\otimes |0\rangle\langle0|
=\frac{X-{\rm i}Y}{2}\otimes \frac{I-Z}{2}\otimes \frac{I+Z}{2}
}\]
\[\displaystyle{
=\frac{1}{2^3}(XII+XIZ-XZI-XZZ-{\rm i}YII-{\rm i}YIZ+{\rm i}YZI+{\rm i}YZZ).
}\]
The expectation value of the Pauli string can be evaluated using the Pauli measurement as
\begin{align}
\langle \varphi | \sigma^{(i,j)}_k | \varphi \rangle = \sum^{2^n-1}_{b=0} t^{(i,j)}_{k,b}\big|\langle b | U^{(i,j)}_k |\varphi \rangle\big|^2,\ \ \ t^{(i,j)}_{k,b}\in\{\pm1\},
\end{align}
where $U^{(i,j)}_k$ is an unitary for the Pauli measurement required to estimate $\langle \varphi | \sigma^{(i,j)}_k | \varphi \rangle$, which consists of the Hadamard and phase gates.
Substituting with these equations, $\langle \varphi |A| \varphi \rangle$ can be written as
\begin{align}
\langle \varphi | A | \varphi \rangle = \sum_{i,j} \langle \varphi | \Big(A_{ij}|i\rangle\langle j|\Big) | \varphi \rangle
= \sum_{i,j}  \sum_k \sum^{2^n-1}_{b=0} \frac{A_{ij} s^{(i,j)}_k t^{(i,j)}_{k,b} \big|\langle b | U^{(i,j)}_k |\varphi\rangle\big|^2}{2^n}
\end{align}

Assuming
\begin{align}
\label{eq:model-pauli}
p(i,j)=\frac{|A_{ij}|}{\sum_{p,q}|A_{pq}|},\ \ \ 
p(k|i,j)=\frac{1}{2^n},\ \ \ 
p(b|i,j,k)= \big|\langle b | U^{(i,j)}_k |\psi\rangle\big|^2,
\end{align}
$\langle \varphi | A | \varphi \rangle$ can be rewritten as
\begin{align}
\langle \varphi | A | \varphi \rangle
= \sum_{i,j}  \sum_k \sum^{2^n-1}_{b=0} \frac{A_{ij} \sum_{p,q}|A_{pq}|s^{(i,j)}_k t^{(i,j)}_{k,b}}{|A_{ij}|}  p(i,j)p(k|i,j)p(b|i,j,k).
\end{align}
Hence, the variance is
\begin{align}
    {\rm Var}_{p}[\hat{a}]
    &\le \sum_{i,j}  \sum_k \sum^{2^n-1}_{b=0} \left| \frac{A_{ij} \sum_{p,q}|A_{pq}|s^{(i,j)}_k t^{(i,j)}_{k,b}}{|A_{ij}|}\right|^2  p(i,j)p(k|i,j)p(b|i,j,k) \nonumber \\
    &= \Big(\sum_{p,q}|A_{pq}|\Big)^2
\end{align}
Note that the method proposed in \cite{xu2021variational} is closely related to the classical shadow with the random Pauli measurements.
That is, a classical shadow, $\hat{\rho}$, associated with \cite{xu2021variational} can be expressed as
\begin{align}
    {\rm tr}(A\hat{\rho}) = s^{(i,j)}_kt^{(i,j)}_{k,b}\sum_{p,q}|A_{pq}|\frac{A_{ij}}{|A_{ij}|},
\end{align}
which is a slightly modified version of the classical shadow with the random Pauli measurements.
Concretely, \cite{xu2021variational} modifies the operator selecting model $p(i,j)p(k|i,j)$ from a uniform distribution to Eq.~\eqref{eq:model-pauli}.

It is easily confirmed that 
\begin{align}
\label{eq:varcomp1}
\max_b |A_{bb}|+ \sum_{l \in \mathcal{S}(A)\red{\setminus\{0^n\}}} \Big( \max_b |A_{(b',c')}| + \max_b |A_{[b',c']}| \Big) \le \sum_{p,q}|A_{pq}|
\end{align}
because the left-hand side of Eq.~\eqref{eq:varcomp1}, which is the same as the definition of $Z$ (Eq.~\eqref{eq:defZ}), only sums over some elements of $A$ whereas the right hand side of Eq.~\eqref{eq:varcomp1} sums over all elements of $A$.
Thus, the upper bound of the variance associated with the XBM is tighter than that of \cite{xu2021variational} and the classical shadow with the random Pauli measurements.

\subsection{Classical Shadow}
\label{app:var-shadow}
The classical shadow~\cite{shadow} is expected to evaluate the expectation value from few measurements.
In \cite{shadow}, two methods, namely random Pauli measurements and random Clifford measurements, were proposed.
With respect to the arbitrary matrix $A$, the random Clifford measurements are far more suitable than the random Pauli measurements because the random Clifford measurements do not require the decomposition of $A$ into the Pauli basis.
The upper bound of the variance associated with the random Clifford measurements obeys
\begin{align}
    \label{eq:var-shadow}
    {\rm Var}\big[ \hat{a} \big]_{{\rm shadow}}
    \le\sqrt{9+\frac{6}{2^n}}{\rm tr}(A^2)
    =O\big({\rm tr}(A^2)\big).
\end{align}

In the XBM method, the upper bound of the variance can be written as
\begin{align}
    \label{eq:xbmbound}
    {\rm Var}_{p}\big[ \hat{a} \big]_{\rm XBM}
    &\le m^2 |A|_{\max}^2
\end{align}
for both models of $p(l,s)$ proposed in the main body, where $|A|_{\max}=\max_{b,c}\big\{|A_{bc}|\big\}$ is a maximum norm of $A$.
To compare Eq.~\eqref{eq:var-shadow} with Eq.~\eqref{eq:xbmbound}, we assume 
\red{
\begin{align}
    \label{eq:assumption}
    |A|^2_{\max}
    =O\Bigg(\frac{{\rm tr}(A^2)}{q}\Bigg)
    \ \ \ ({\rm assumption})
\end{align}
where $q$ is the number of non-zero elements of $A$.
This assumption is rephrased as that the square of the maximum value of the elements of $A$ is the same order as the mean of the square of the non-zero elements of $A$, and is valid for such as FEM because the element stiffness of the system is the same order in each element.
Under this assumption,
}
\if0
\begin{align}
    \label{eq:assumption}
    |A|^2_{\max}
    \approx |\mu_A|^2+|\sigma_A|^2
    \ \ \ ({\rm assumption})
\end{align}
where $\mu_A$ and $\sigma_A$ are respectively the mean and standard deviation of non-zero components of $A$
\footnote{
\red{
For example, if the non-zero components of $A$ obey logistic distribution $f(x;\mu,s)=\exp(-\frac{x-\mu}{s})/(s(1+\exp(-\frac{x-\mu}{s}))^2)$, the mode of the maximum value of $A$ is $\mu+s\log(N_{\rm sample})$ where $N_{\rm sample}$ is the number of samples obeying the distribution $f(x;\mu,s)$.
This comes $O(|A|^2_{\max})=O(|\mu+s\log(N_{\rm sample})|^2)\approx O(|\mu|^2+|\sigma|^2)$ where $\sigma=\pi s/\sqrt{3}$ is the standard deviation of $f(x;\mu,s)$, which is roughly comparable to Eq.~\eqref{eq:assumption}.
}
}.
In addition, we consider that $A$ is a band matrix whose bandwidth is $k$.
Moreover, from the definition of the variance,
\begin{align}
    \label{eq:relationship}
    {\rm tr}(A^2)=q \big(|\mu_A|^2+|\sigma_A|^2\big)
\end{align}
where $q$ is the number of non-zero elements of $A$.
Substituting Eqs.~\eqref{eq:assumption} and \eqref{eq:relationship} to Eq.~\eqref{eq:xbmbound}, we obtain
\fi
%
\begin{align}
    {\rm Var}_{p}\big[ \hat{a} \big]_{\rm XBM}
    &\red{=O\Bigg( \frac{m^2}{q}{\rm tr}(A^2) \Bigg)}
\end{align}
This means the variance when using the XBM depends on the ratio of $m^2$ to $q$ under the assumption of Eq.~\eqref{eq:assumption}.
The range of the number of non-zero elements, $q$, is
\begin{align}
    m \le q \le 2^n(2k+1)-k(k+1).
\end{align}
The lower bound, $q=m$, is the case that each group for simultaneous measurement has only one element (Fig.~\ref{fig:qlowhigh}(a)) whereas the upper bound, $q= 2^n(2k+1)-k(k+1)$, is the case that all $A_{ij}$ for $|i-j|\le k$ have a non-zero value (Fig.~\ref{fig:qlowhigh}(b)).
\begin{figure}[tb]
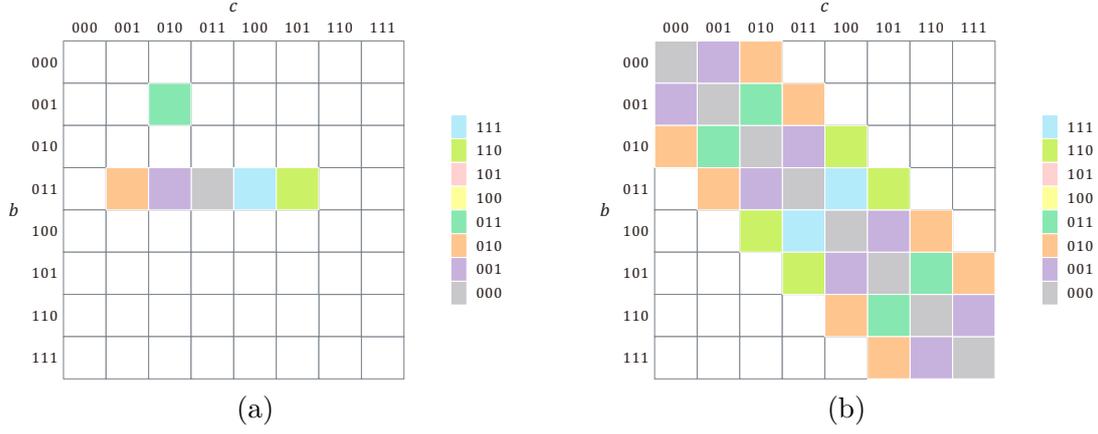

    \centering
    \begin{minipage}{0.45\hsize}
        \centering
        \includegraphics[scale=0.7]{qlow.pdf}\\
        (a)
    \end{minipage}
    \begin{minipage}{0.45\hsize}
        \centering
        \includegraphics[scale=0.7]{qhigh.pdf}\\
        (b)
    \end{minipage}
    \\[5mm]
    \caption{Cases for the lower and upper bounds of the number of non-zero elements $q$ for $n=3$ qubits and $k=2$ bandwidth. (a) Lower-bound case $q=m$. (b) Upper-bound case $q=2^n(2k+1)-k(k+1)$. Components of the same color can be simultaneously evaluated by one measurement operator using the XBM method. The color range is defined by $l=b\oplus c$, where $b$ and $c$ are the row and column indices of $A$.}
    \label{fig:qlowhigh}
\end{figure}
The worst case for the variance when using the XBM method is $q=m=2((n-\lceil\log_2k\rceil)k+2^{\lceil\log_2k\rceil})\approx2k(1+n-\log_2k)$.
In this case,
\begin{align}
    \frac{m^2}{q}
    =\left\{
        \begin{array}{ll}
            \displaystyle{O(n^{c+1})} & {\rm when}\ \ \ k=O(n^{c})\\[5mm]
            O(2^n) & {\rm when}\ \ \ k=O(2^{n})
        \end{array}
    \right. .
\end{align}
Meanwhile, when $q=2^n(2k+1)-k(k+1)=O(k2^n)$,
\begin{align}
    \frac{m^2}{q}
    =\left\{
        \begin{array}{ll}
            \displaystyle{O\Bigg(\frac{n^{c+1}}{2^n}\Bigg)} & {\rm when}\ \ \ k=O(n^{c})\\[5mm]
            O(1) & {\rm when}\ \ \ k=O(2^n)
        \end{array}
    \right. .
\end{align}
In summary,
\begin{align}
    \label{eq:xbmbound2}
    \red{ {\rm Var}_{p}\big[ \hat{a} \big]_{\rm XBM} }
    =\left\{
        \begin{array}{ll}
            O\big(n^{c+1}\,{\rm tr}(A^2)\big) & {\rm when}\ \ \ k=O(n^c)\ \ \ {\rm and} \ \ \ q=O(m) \\[5mm]
            O\big(2^n\,{\rm tr}(A^2)\big) & {\rm when}\ \ \ k=O(2^n)\ \ \ {\rm and} \ \ \ q=O(m) \\[5mm]
            \displaystyle{O\Bigg(\frac{n^{c+1}}{2^n}\,{\rm tr}(A^2)\Bigg)} & {\rm when}\ \ \ k=O(n^{c})\ \ \ {\rm and} \ \ \ q=O(k2^n) \\[5mm]
            O\big(\,{\rm tr}(A^2)\big) & {\rm when}\ \ \ k=O(2^n)\ \ \ {\rm and} \ \ \ q=O(k2^n) 
        \end{array}
    \right. .
\end{align}

When $q=O(m)$, which is the case shown in Fig.~\ref{fig:qlowhigh}(a), the variance associated with the XBM is worse than that associated with the classical shadow (Eqs.~\eqref{eq:xbmbound2}$_1$ and \eqref{eq:xbmbound2}$_2$).
However, when $q=O(k2^n)$, which is the case shown in Fig.~\ref{fig:qlowhigh}(b), the variance associated with the XBM is at least of the same order as that associated with the classical shadow (Eqs.~\eqref{eq:xbmbound2}$_3$ and \eqref{eq:xbmbound2}$_4$).
In particular, when $k=O(n^c)$ and $q=O(k2^n)$, which is the case that $A$ is a matrix with a narrow bandwidth whose non-zero elements are densely filled in the band, the variance associated with the XBM is exponentially tighter than that associated with the classical shadow (Eqs.~\eqref{eq:xbmbound2}$_3$).
Such $A$ frequently appear as the global stiffness matrix in the finite element method as mentioned in the introduction of the main body.

\section{Classical Shadow with XBM}
\label{app:rel-shadow}
The XBM method is closely related to the classical shadow~\cite{shadow}.
Let $\rho$, $\mathcal{U}$ and $\mathcal{M}$ be an arbitrary density matrix, unitary ensemble, and quantum channel, respectively.
These are defined as
\red{
\begin{align}
\label{eq:ensemble-xbm}
    \mathcal{U}:=\{M^{(l)}_s\}_{l\in\mathcal{S}(A),s\in\{{\rm Re},{\rm Im}\}}
\end{align}
}
and
\begin{align}
    \mathcal{M}(\rho) 
    & := \mathbb{E}_{U\sim{\rm P}_\mathcal{U},\,b\sim\langle b|\rho|b\rangle}\big[U^\dagger |b\rangle \langle b|U\big] \nonumber \\
    &=\sum_b\Bigg[ p(l=0^n) \langle b | \rho | b \rangle |b \rangle \langle b |
    +\sum_{l\in\mathcal{S}(A)\red{\setminus\{0^n\}}} \sum_{s\in\{{\rm Re},{\rm Im}\}} p(l,s) \langle b | M^{(l)}_s \rho M^{(l)\dagger}_s | b \rangle M^{(l)\dagger}_s | b \rangle \langle b | M^{(l)}_s
    \Bigg],
\end{align}
respectively, where ${\rm P}_{\mathcal U}$ is a probability distribution over $\mathcal{U}$.
A classical shadow, $\hat{\rho}:= \mathcal{M}^{-1}\big(U^\dagger | b \rangle \langle b | U\big)$, $U\sim {\rm P}_\mathcal{U}$, $b\sim\langle b|\rho|b\rangle$, associated with the XBM method is expressed as
\begin{align}
\label{eq:shadow-xbm}
    \hat{\rho}(l,s,b)
    &=\left\{
        \begin{array}{ll}
            \displaystyle{\frac{|b \rangle \langle b|}{p(l=0^n)}} & {\rm for}\ \ \ l=0^n\\[5mm]
            \displaystyle{(-1)^w \frac{|c' \rangle \langle b'|+|b' \rangle \langle c'|}{2p(l,s={\rm Re})}} & {\rm for}\ \ \ l\neq0^n\ \ \ {\rm and}\ \ \ s={\rm Re} \\[5mm]
            \displaystyle{{\rm i}(-1)^w \frac{|c' \rangle \langle b'|-|b' \rangle \langle c'|}{2p(l,s={\rm Im})}} & {\rm for}\ \ \ l\neq0^n\ \ \ {\rm and}\ \ \ s={\rm Im}
        \end{array}
    \right.
\end{align}
where $(b',c',w)$ are defined in Table~\ref{tab:def}.
The above can be easily derived by replacing $A_{bc}$ with $|c\rangle \langle b|$ because
$|\psi\rangle\langle\psi|=\sum_{b,c}|c\rangle \langle b| \langle \psi | b \rangle \langle c | \psi \rangle$
takes the same form as Eq.~\eqref{eq:Adecomp}.
The simple choice for the operator selecting model is
\begin{align}
    p(l=0^n)=p(l\neq0^n,s)=\frac{1}{2^{n+1}-1}
\end{align}
where $2^{n+1}-1$ is the number of unique circuits used to evaluate the expectation value of the dense $2^n\times 2^n$ matrix.
The classical shadow associated with the XBM method satisfies
\begin{align}
    \mathbb{E}_{l,s,b\sim p(l,s,b)}[\hat{\rho}(l,s,b)] = \rho
\end{align}
and
\begin{align}
    {\rm tr}(A\hat{\rho}(l,s,b)) = a(l,s,b)
\end{align}
where $a(l,s,b)$ is defined in Eq.~\eqref{eq:coeff2}.
There is a low requirement of the classical memory to store the classical shadow associated with the XBM method.
That is, only the indices $\{(l,s)\}$ and bit strings measured $\{b\}$ need to be stored.

The measurement operators associated with the XBM method are Clifford circuits, which consist of CNOT, Hadamard and phase gates, and therefore the XBM can be seen as the classical shadow with the random Clifford measurement whose unitary ensemble is a subset of Clifford group ${\rm Cl}(2^n)$ (Eq.~\eqref{eq:ensemble-xbm}), and the unitary is selected with probability $p(l,s)$ that can be modeled to reduce the variance.
From this point of view, our conclusion can be restated that by using an unitary ensemble Eq.~\eqref{eq:ensemble-xbm} and appropriate operator selecting model, the expectation value can be evaluated computationally more efficiently than by using random Clifford measurements in a specific condition.
Concretely, 
\begin{enumerate}
\item the number of \red{two-qubit entangling gates} in each measurement operator is reduced from $O(n^2/\log(n))$ to $O(n)$ and
\item the variance reduces from $O({\rm tr}(A^2))$ to $O((n^{c+1}/2^n){\rm tr}(A^2))$ when non-zero elements of $A$ are densely filled in the bandwidth $k$.
\end{enumerate}

\section{Proofs}
\label{app:proofs}

\subsection{Colored Matrix}
\label{app:colored}
\begin{figure}[tb]
    \centering
    \includegraphics[scale=0.7]{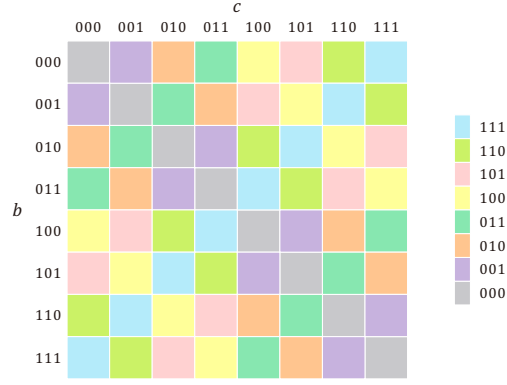}\\
    \caption{Schematic illustration of the colored matrix in the case of the three-qubit system. The $(b,c)$ component is colored with the $b\oplus c$~th color. Components of the same color can be simultaneously measured using the XBM method.}
    \label{fig:colored}
\end{figure}

In proving theorems \ref{theo:1} and \ref{theo:2}, we introduce a colored matrix (see Fig.~\ref{fig:colored}) as a visual reference.
In the colored matrix, the $(b,c)$ component of $A$ is positioned on the $b$~th row and $c$~th column, and each component is colored with the $b\oplus c$~th color.
Components of the same color are simultaneously measurable using the XBM method.
Hence, the number of colors with a non-zero element value is the same as the number of unique circuits used in evaluating $\big<\psi_0\big|A\big|\psi_1\big>$.

\subsection{Proof of Theorem \ref{theo:1}}
\label{app:proofs1}

\begin{figure}[tb]
    \centering
    \begin{minipage}{0.23\hsize}
        \centering
        \includegraphics[scale=1]{sparse-1.pdf}\\
        (a)
    \end{minipage}
    \begin{minipage}{0.23\hsize}
        \centering
        \includegraphics[scale=1]{sparse-2.pdf}\\
        (b)
    \end{minipage}
    \begin{minipage}{0.23\hsize}
        \centering
        \includegraphics[scale=1]{sparse-3.pdf}\\
        (c)
    \end{minipage}
    \\[5mm]
    \begin{minipage}{0.23\hsize}
        \centering
        \includegraphics[scale=1]{sparse-4.pdf}\\
        (d)
    \end{minipage}
    \begin{minipage}{0.23\hsize}
        \centering
        \includegraphics[scale=1]{sparse-5.pdf}\\
        (e)
    \end{minipage}
    \begin{minipage}{0.23\hsize}
        \centering
        \includegraphics[scale=1]{sparse-6.pdf}\\
        (f)
    \end{minipage}
    \\[5mm]
    \begin{minipage}{0.23\hsize}
        \centering
        \includegraphics[scale=1]{sparse-7.pdf}\\
        (g)
    \end{minipage}
    \begin{minipage}{0.23\hsize}
        \centering
        \includegraphics[scale=1]{sparse-8.pdf}\\
        (h)
    \end{minipage}
    \begin{minipage}{0.23\hsize}
        \centering
        \includegraphics[scale=1]{sparse-9.pdf}\\
        (i)
    \end{minipage}
    \\[5mm]
    \caption{Step-by-step algorithm for making a 1-sparse matrix including all colors: (a) the same illustration as in Fig.~\ref{fig:colored} for $n=2$; (b) in the first step, manually obtain a 1-sparse matrix for $n=2$ including all colors; (c) split the matrix (b) in half and move the halves to the top-left and bottom-right; (d) fill the top-right and bottom-left color, copy the patterns from the top-left and bottom right to the top-right and bottom-left, respectively, and delete the unmarked square's color; (e) 1-sparse matrix for $n=3$ including all colors; (g)-(i) repeat (c) to (f) until the matrix has the desired size.}
    \label{fig:sparse}
\end{figure}

We show that there exists an algorithm that produces a 1-sparse matrix including all colors in the colored matrix defined in Sec.~\ref{app:colored}.
First, for the two-qubit system, a 1-sparse matrix including all four colors is found manually (Fig.~\ref{fig:sparse}(a) and (b)).
Next, a $2^2\times2^2$ matrix is split in half and the halves are moved to the top-left and right-bottom of a $2^3\times 2^3$ matrix as in Fig.~\ref{fig:sparse}(c).
The colored patterns are then copied to the top-right and bottom-left as in Fig.~\ref{fig:sparse}(d).
Repeating this procedure as in Fig.~\ref{fig:sparse}(e) to (i), a 1-sparse matrix including all colors is obtained for any size $2^n \times 2^n$.

\qed

\subsection{Proof of Theorem \ref{theo:2}}
\label{app:proofs2}

\begin{figure}[tb]
    \centering
    \includegraphics[scale=1]{illust-nbcircs.pdf}\\
    \caption{Schematic illustration of the number of colors of the colored matrix for $n=1,2,3,4$.}
    \label{fig:band-nbcircs}
\end{figure}

\begin{figure}[tb]
    \centering
    \includegraphics[scale=1]{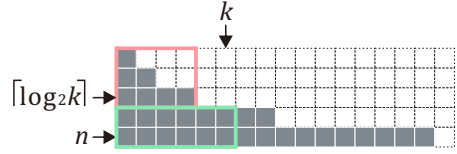}\\
    \caption{Schematic illustration of matrix $b^{(j)}_i$ where the horizontal and vertical correspond to indices $i$ and $j$, respectively. A filled square indicates an entry of $1$ whereas a blank indicates an entry of $0$. The number of entries included in the $n\times k$ square is equal to $\sum^{n}_{j=1}\sum^k_{i=1}b^{(j)}_i$, which can be decomposed into the red and green square corresponding to $\sum^{\lceil\log_2k\rceil}_{j=1}2^{j-1}$ and $(n-\lceil\log_2k\rceil)k$, respectively.}
    \label{fig:bmatrix}
\end{figure}

The number of colors in a colored matrix of bandwidth $k$, which we denote by $N^{(n)}_{k}$, is counted as in Fig.~\ref{fig:band-nbcircs}.
From Fig.~\ref{fig:band-nbcircs}, $N^{(n)}_{k}$ can be written as 
\begin{align}
N^{(n)}_{k}=1+\sum^k_{i=1}\Big( a^{(n)}_i+b^{(n)}_i\Big)
\end{align}
where
\begin{align}
a^{(1)}_i=0,
\hspace{5mm}
b^{(n)}_i=
\left\{
\begin{array}{ll}
1 & 1\le i \le 2^{n-1}\\
0 & {\rm else}
\end{array}
\right.,
\hspace{5mm}
a^{(n)}_i=a^{(n-1)}_i+b^{(n-1)}_i.
\end{align}
Solving this recurrence relation, we obtain
\begin{align}
    N^{(n)}_k
    = 1 + \sum^n_{j=1} \sum^k_{i=1} b^{(j)}_i
    = 1 + (n-\lceil \log_2 k \rceil)k + \sum^{\lceil \log_2 k \rceil}_{j=1} 2^{j-1}
    = (n-r)k + 2^r,
\end{align}
where $r:=\lceil \log_2 k \rceil$ (see Fig.~\ref{fig:bmatrix}).

When both real and imaginary parts of $\big<\varphi\big|b\big>\big<c\big|\varphi\big>$ are required, the number of unique circuits is doubled.
Hence, the upper bound of the number of unique circuits is $2((n-r)k+2^r)$.
When evaluating $\langle \psi_0 | A' | \psi_1 \rangle=\langle \varphi | A | \varphi \rangle$ with $|\psi_0 \rangle=|\psi_1\rangle$, the matrix $A$ can have non-zero diagonal components.
Considering that ${\rm Im}\big(\big<\varphi\big|b\big>\big<b\big|\varphi\big>\big)={\rm Im}\big(|\big<b\big|\varphi\big>|^2\big)=0$,
the upper bound of the number of unique circuits can be reduced to $2((n-r)k+2^r)-1$ in this case.

\qed

\subsection{Na\"ive Pauli measurements}
\label{app:proofs3}
When $A_{bc}\neq 0$, ${\rm Tr}\big(A^\dagger{\bf P}\big)$ takes a non-zero value where ${\bf P}:=P_{(b\oplus c)_{n-1}} \otimes P_{(b\oplus c)_{n-2}} \otimes \cdots \otimes P_{(b\oplus c)_{0}}$, $P_0\in\{I,Z\}$, $P_1\in\{X,Y\}$, and $I$, $X$, $Y$ and $Z$ are the Pauli matrices.
Therefore, for each distinct $b\oplus c$ such that $A_{bc}\neq0$, $2^n$ Pauli strings are required to express $\big<\psi_0\big|A\big|\psi_0\big>=\sum_{\bf P}\Big({\rm Tr}\big(A^\dagger{\bf P}\big)/2^n\Big)\big<\psi_0\big|{\bf P}\big|\psi_0\big>$.
The number of distinct $b\oplus c$ is the same as the number of colors in the colored matrix defined in Sec.~\ref{app:colored}, and the number of unique Pauli strings required to express $\big<\psi_0\big|A\big|\psi_0\big>$ is therefore $2^n((n-r)k+2^r)$.
In the case that $\big<\psi_0\big|A\big|\psi_1\big>$, ${\bf P}:=P_n \otimes P_{(b\oplus c)_{n-1}} \otimes P_{(b\oplus c)_{n-2}} \otimes \cdots \otimes P_{(b\oplus c)_{0}}$ is used instead where $P_n\in\{X,Y\}$ and the number of unique Pauli strings required to express $\big<\psi_0\big|A\big|\psi_1\big>$ is therefore $2^{n+1}((n-r)k+2^r)$.

\subsection{Evaluating Pauli strings by XBM}
\label{app:xbmpauli}
Consider evaluating an expectation value of one given Pauli string using the XBM method.
Let $\sigma_{n-1} \sigma_{n-2} \cdots \sigma_1 \sigma_0$ be a Pauli string where $\sigma_i \in \{I,X,Y,Z\}$.
Converting this Pauli string into a row-column basis gives a matrix $A$ where $A_{b, b\oplus p}\neq 0$ for all $b=0,1,\cdots,2^n-1$ and $p$ is a bit string defined as
\begin{align}
\label{eq:}
    p_i
    &:=\left\{
        \begin{array}{ll}
            0 & {\rm if}\ \ \ \sigma_i \in \{I,Z\}\\[5mm]
            1 & {\rm if}\ \ \ \sigma_i \in \{X,Y\}
        \end{array}
    \right. .
\end{align}
The non-zero elements of $A$ are simultaneously measurable using the XBM method because the XBM method groups all $A_{bc}$ that satisfy $b\oplus c=b'\oplus c'$, and $b\oplus (b\oplus p)=p$.
For example, consider $XIYYZ$.
In this case, $p=10110$ and non-zero elements of $A$ are $A_{00000,10110}, A_{00001,10111},\cdots,A_{11111,01001}$.
All components of such $A$ are simultaneously measurable using the XBM method because $00000\oplus10110=00001\oplus 10111=\cdots=11111\oplus01001=10110=p$.
Therefore, the number of unique circuits to evaluate $\big<\varphi\big|XIYYZ\big|\varphi\big>$ using the XBM method is two (corresponding to the real and imaginary parts).

\end{document}